\documentclass[12pt]{article}

\usepackage[T1]{fontenc}
\usepackage{lmodern}

\usepackage{setspace}
\usepackage[margin=1.25in]{geometry}

\usepackage[dvipsnames]{xcolor}
\usepackage{pdfpages}
\usepackage{float}
\usepackage{multirow}
\usepackage{caption}
\usepackage{subcaption}
\usepackage{epigraph}

\usepackage{mathtools}
\usepackage{amssymb}
\usepackage{amsthm}
\usepackage{amsfonts}
\usepackage{aliascnt}
\usepackage{accents}
\usepackage{dutchcal}
\usepackage{bbm}
\usepackage{mathrsfs}

\usepackage{enumitem}
\usepackage{sgame}
\usepackage{tikz}
\usepackage{tikz-cd}
\usetikzlibrary{calc,shapes,arrows}
\usepackage{tcolorbox}
\usepackage{listings}
\usepackage[normalem]{ulem}

\usepackage[round]{natbib}
\DeclareMathOperator{\supp}{supp}

\DeclareMathOperator{\conv}{conv}

\newcommand{\E}{\mathbb{E}}
\newcommand{\R}{\mathbb{R}}
\newcommand{\Q}{\mathbb{Q}}

\DeclareMathOperator*{\esssup}{ess\,sup}
\DeclareMathOperator*{\essinf}{ess\,inf}

\newcommand{\Law}{\mathcal L}
\newcommand{\one}{\mathbf 1}

\newcommand{\cat}{\mathrm{cat}}

\newtheorem{theorem}{Theorem}[section]

\newaliascnt{proposition}{theorem}
\newtheorem{proposition}[proposition]{Proposition}
\aliascntresetthe{proposition}

\newaliascnt{lemma}{theorem}
\newtheorem{lemma}[lemma]{Lemma}
\aliascntresetthe{lemma}

\newaliascnt{corollary}{theorem}
\newtheorem{corollary}[corollary]{Corollary}
\aliascntresetthe{corollary}

\newaliascnt{claim}{theorem}
\newtheorem{claim}[claim]{Claim}
\aliascntresetthe{claim}

\theoremstyle{definition}

\newaliascnt{definition}{theorem}
\newtheorem{definition}[definition]{Definition}
\aliascntresetthe{definition}

\newaliascnt{example}{theorem}

\aliascntresetthe{example}

\newaliascnt{assumption}{theorem}

\aliascntresetthe{assumption}

\newaliascnt{condition}{theorem}

\aliascntresetthe{condition}

\newaliascnt{question}{theorem}

\aliascntresetthe{question}

\newaliascnt{remark}{theorem}
\newtheorem{remark}[remark]{Remark}
\aliascntresetthe{remark}

\newaliascnt{remarks}{theorem}

\aliascntresetthe{remarks}

\newaliascnt{aside}{theorem}

\aliascntresetthe{aside}

\newaliascnt{note}{theorem}

\aliascntresetthe{note}

\usepackage{thmtools}
\usepackage{thm-restate}

\usepackage{hyperref}

\hypersetup{
    colorlinks=true,
    linkcolor=OrangeRed,
    filecolor=Thistle,
    urlcolor=Thistle,
    citecolor=Thistle,
}

\usepackage[nameinlink]{cleveref}

\crefname{theorem}{theorem}{theorems}
\Crefname{theorem}{Theorem}{Theorems}
\crefname{proposition}{proposition}{propositions}
\Crefname{proposition}{Proposition}{Propositions}
\crefname{lemma}{lemma}{lemmas}
\Crefname{lemma}{Lemma}{Lemmas}
\crefname{corollary}{corollary}{corollaries}
\Crefname{corollary}{Corollary}{Corollaries}
\crefname{claim}{claim}{claims}
\Crefname{claim}{Claim}{Claims}
\crefname{definition}{definition}{definitions}
\Crefname{definition}{Definition}{Definitions}
\crefname{example}{example}{examples}
\Crefname{example}{Example}{Examples}
\crefname{assumption}{assumption}{assumptions}
\Crefname{assumption}{Assumption}{Assumptions}

\makeatletter
\let\cref@old@isrefconsecutive\cref@isrefconsecutive
\def\cref@isrefconsecutive#1#2{%
  \begingroup
    \def\cref@assumptiontype{assumption}%
    \cref@gettype{#1}{\cref@typea}%
    \ifx\cref@typea\cref@assumptiontype
      \endgroup
      \@cref@refconsecutivefalse
    \else
      \endgroup
      \cref@old@isrefconsecutive{#1}{#2}%
    \fi
}
\makeatother

\crefname{condition}{condition}{conditions}
\Crefname{condition}{Condition}{Conditions}
\crefname{question}{question}{questions}
\Crefname{question}{Question}{Questions}
\crefname{remark}{remark}{remarks}
\Crefname{remark}{Remark}{Remarks}
\crefname{remarks}{remarks}{remarks}
\Crefname{remarks}{Remarks}{Remarks}
\crefname{aside}{aside}{asides}
\Crefname{aside}{Aside}{Asides}
\crefname{note}{note}{notes}
\Crefname{note}{Note}{Notes}
\crefname{appendix}{appendix}{appendices}
\Crefname{appendix}{Appendix}{Appendices}

\newcommand{\secref}[1]{\hyperref[#1]{\S\ref*{#1}}}

\definecolor{backcolour}{rgb}{0.63,0.79,0.95}
\lstdefinestyle{mystyle}{
  backgroundcolor=\color{backcolour},
  basicstyle=\ttfamily\footnotesize,
  breakatwhitespace=false,
  breaklines=true,
  captionpos=b,
  keepspaces=true,
  numbers=left,
  numbersep=5pt,
  showspaces=false,
  showstringspaces=false,
  showtabs=false,
  tabsize=2
}
\begin{document}
\title{Multidimensional Risk Made Easy}
\author{Mark Whitmeyer\thanks{Arizona State University. Email: \href{mailto:mark.whitmeyer@gmail.com}{mark.whitmeyer@gmail.com}. Dedicated to KS. I used ChatGPT as one would an RA and \href{https://www.refine.ink}{refine.ink} for feedback.}}
\date{\today}
\maketitle

\begin{abstract}
Suppose we want to assign a certainty equivalent--one number--to a multivariate risk. Which such assignments are law-invariant, monotone with respect to vector stochastic dominance, and invariant to independent background risk? I show that every such certainty equivalent is a positive mixture of scalar entropic certainty equivalents applied to positive projections of the vector risk. The same representation yields a robust-order characterization: unanimity across such certainty equivalents is equivalent, up to closure, to dominance after adding independent multidimensional background risk. In a social-welfare specialization, the corresponding shadow valuations are welfare weights.
\end{abstract}

\section{Introduction}\label{sec:intro}

In this paper, I characterize a natural class of certainty equivalents for vector-valued risks. In one dimension, the operation looks innocent enough: a random number is replaced by a sure number, and nothing has had to say where that number lives. With vector-valued risks, this innocence is gone. There is a small preliminary nuisance that one dimension hides: a certainty equivalent is supposed to be a sure payoff replacing the risk, but here the risk is a vector. If that replacement is to be summarized by a number, the number has to be a number \textit{of something}. 

I fix that something by choosing a \textit{numeraire} direction \(e\), and I measure sure equivalents along the line \(\R e\). For a risk \(X\), the sure equivalent payoff is \(\Phi(X)e\). The scalar \(\Phi(X)\) is the certainty equivalent: the number of numeraire units in that payoff so that \(\Phi(X)=c\) means indifference between risk \(X\) and sure thing \(ce\).

Importantly, \(\Phi(X)\) is a scalar even though \(X\) is multidimensional. This is the little sleight of hand that the notation makes look harmless. One number is being asked to stand in for a random vector, and the fact that the answer is a number does not by itself say what the number is allowed to ignore. It may forget the bookkeeping by which the uncertainty is described, but not the distribution of the risk. It should respect directions that are unambiguously better, but not pretend that all vector movements are already comparable. It ought to be stable in the presence of unrelated independent noise, but not let such noise rewrite the ranking of the incremental risks being compared. I turn these requirements into three benchmark axioms. The certainty equivalent must be law-invariant, monotone with respect to vector first-order stochastic dominance, and invariant to independent background risk. Once normalized by the numeraire, these are what I term the \textit{additive certainty equivalents}.

The last of these desiderata says that the ranking of two incremental vector risks should not change merely because an unrelated independent risk is also present. For certainty equivalents, this is exactly additivity across independent risks. The monotonicity requirement needs an order on vectors. I encode this order by a set \(C\) of unambiguous improvements: directions in which moving is, by assumption, better. \textit{Viz.}, \(x+c\) is better than \(x\) whenever \(c\in C\). In the usual distributional example, \(C=\R^d_+\), so an improvement is just more for at least one group and less for none. I use the more abstract notation because the same argument applies to any closed convex cone of such improvements.

This discipline has bite even when nothing is risky. For a certainty equivalent \(\Phi\), I write \(\ell(x)\coloneqq \Phi(x)\) (\(x\in V\)) for the value it assigns to a deterministic payoff. Additivity makes \(\ell\) additive. Monotonicity rules out pathologies, makes \(\ell\) linear, and also makes \(\ell\) positive on the improvement cone: if \(c\in C\), then \(\ell(c)\ge0\). Since certainty equivalents are measured in units of the numeraire direction \(e\), I normalize by \(\ell(e)=1\).

The same order and numeraire also determine a larger family of positive linear tests. These are the linear functionals \(q\in C^*\) with \(q(e)=1\). Equivalently, a \(q\) assigns nonnegative value to every unambiguous improvement and assigns value one to the numeraire. I call these tests \textit{shadow valuations}, meant literally. That is, I am not assuming that the decision maker begins with a little linear price system in her head and then evaluates risk through it. The \(q\)s are the linear objects the representation forces into view.

What a shadow valuation does is simple. Applied to a deterministic payoff \(x\in V\), it produces the scalar \(q\cdot x\). Applied to a vector risk \(X\), it yields the real random variable \(q\cdot X\). Keep in mind that at this point I have said nothing about risk attitudes, but only described the transformations \(q\cdot X\) through which a vector risk can be turned into an ordinary real risk. The natural objection is: why should there be many such readings? Deterministic choice has already produced one, namely \(\ell\). Why not just evaluate \(\ell\cdot X\) and be done?

This is the tempting shortcut, and it is exactly where multidimensional risk reappears. Let the risk be two dimensional \(V=\R^2\), let the numeraire be \(e=(1,1)\), let improvements be coordinatewise, and suppose the deterministic valuation is \(\ell=(1/2,1/2)\). Now compare
\[
  X=
  \begin{cases}
    (1,0), &\text{with probability }1/2,\\
    (0,1), &\text{with probability }1/2,
  \end{cases}
  \qquad\text{and}\qquad
  Y=(1/2,1/2)\quad\text{for sure.}
\]
Under the deterministic evaluation \(\ell\), \(\ell\cdot X=\ell\cdot Y=1/2\) almost surely. But look at what \(\ell\cdot X\) has forgotten. A planner may care that, \textit{ex post}, one group receives everything. An investor may care which account or currency pays off in which state. A household may care whether wage risk and health risk arrive together or separately. These are things the deterministic tradeoff was never asked, and is not rich enough, to answer.

My theorem says what replaces the shortcut. The first ingredient is familiar from \citet{MPST2024}: entropic risk sensitivity. The second is specific to vector risk: the shadow valuation \(q\), which determines which real payoff \(q\cdot X\) is being evaluated. My main result, \Cref{thm:scalar-main}, says that every additive certainty equivalent averages scalar entropic evaluations of these shadow-value readings: \(\Phi(X) = \int K_a(q\cdot X) dm(a,q)\), where \(K_a(Z)\) equals \(\frac1a\log\E \mathrm e^{aZ}\) if \(a\in\R\setminus\{0\}\) and \(\E[Z]\) if \(a = 0\).\footnote{If \(a = + \infty\), \(K_a(Z) = \esssup Z\); and if \(a = - \infty\), \(K_a(Z) = \essinf Z\).} The parameter \(a\) describes risk sensitivity along the real risk \(q\cdot X\). The shadow valuation \(q\) describes the manner of aggregation. 

Moreover, when deterministic payoffs are evaluated by \(\ell\), the representing measure satisfies \(\int q dm=\ell\). Accordingly, deterministic choices identify the average shadow valuation, but not the distribution of shadow valuations around that average; they specify the center of mass, whereas risky multidimensional choices reveal what is spread around it. All in all, a multidimensional risk attitude decomposes into jointly distributed shadow valuations and scalar risk sensitivities.\footnote{The main theorem of \citet{MPST2024} starts with a real-valued risk, so the representation has only the risk-sensitivity parameter \(a\). Here the risk is vector-valued. The scalar entropic certainty equivalents remain the building blocks, while the projection \(q\cdot X\) of the vector risk becomes part of the representation. Applying their scalar theorem after the deterministic projection \(\ell\cdot X\) is the special case in which the representing measure is concentrated at \(q=\ell\).}

I close with two applications. In a social-welfare application with \(C=\R^d_+\) and \(e=(1,\ldots,1)\), shadow valuations are welfare weights. I fix deterministic weights \(\ell\) for a Pareto-monotone planner and ask which local risky comparisons survive all risk-averse representations with those weights. I find that every welfare-weighted aggregate compatible with the support of \(\ell\) must be less variable.

My second application is a comparison of income streams. The numeraire is a benchmark stream, and a shadow valuation \(q\) is a normalized positive valuation of dated payoffs in units of that benchmark. I show that multiple-discount-rate unanimity has a risky analog. The deterministic test asks whether \(q\cdot x\ge q\cdot y\) for every relevant dated valuation \(q\). The risky test asks whether \(K_a(q\cdot X)\ge K_a(q\cdot Y)\), so the comparison must survive both the dated valuation \(q\) and the scalar risk sensitivity \(a\). When the streams are deterministic, the test collapses back to present value.

\smallskip

\noindent \textbf{Roadmap.}
\secref{sec:model} introduces the environment, before I prove the representation theorem in \secref{sec:representation}. \secref{sec:robust} characterizes robust comparisons and their background-risk interpretation. \secref{sec:apps} discusses applications to distributional social welfare and comparisons of income streams over time. Formal proofs reside in \Cref{app:proofs}.

\subsection{Related literature}

This paper belongs to the literature on risk with many commodities. The classical expected-utility approach fixes a utility function on vector outcomes and asks how its curvature,\footnote{See, e.g., \citet{Stiglitz1969}, \citet{KihlstromMirman1974}, and \citet{KihlstromMirman1981}.} cross-curvature,\footnote{For instance, \citet{Richard1975}.} and induced risk premia\footnote{E.g., \citet{Duncan1977}, \citet{Karni1979}, and \citet{LevyLevy1991}.} describe attitudes toward multivariate risk. That program produces multivariate conceptions of risk aversion and comparative risk aversion.\footnote{Refer to \citet{Keeney1973}, \citet{Schlee1990}, and \citet{Grant1995}.} I study the same fundamental object--a random vector--but not the same representation problem.

A related branch studies how risks are paired across attributes. One strand provides dominance and derivative-sign conditions for multivariate utility functions;\footnote{For example, \citet{Scarsini1988} and \citet{EeckhoudtReySchlesinger2007}.} another studies risk apportionment and the preference for combining good outcomes in some dimensions with bad outcomes in others.\footnote{See \citet{EeckhoudtSchlesinger2006}, \citet{EeckhoudtSchlesingerTsetlin2009}, and \citet{TsetlinWinkler2009}.} Recent work also decomposes multidimensional risk premia into the pieces generated by the certainty-indifference map and the curvature of utility under risk \citep{EeckhoudtPaganiPeluso2023}. My paper is closest to that line when two risks have the same deterministic scalarization but differ in how payoffs are distributed across dimensions and states. 

There is also a non-expected-utility and multivariate risk-measure literature. Some papers extend generalized expected utility and dominance axioms to multidimensional distributions;\footnote{For instance, \citet{Karni1989} and \citet{SafraSegal1993}.} others extend dual theory, local utility, or coherent risk measurement to multivariate risks.\footnote{\citet{GalichonHenry2012}, \citet{EkelandGalichonHenry2012}, and \citet{CharpentierGalichonHenry2016}, in turn. For set-valued risk measures, reference \citet{HamelHeyde2010} and \citet{HamelHeydeRudloff2011}.} 
The closest recent behavioral comparator is \citet{KeZhang2025}, who ask how a decision maker organizes multidimensional risk: whether dimensions are aggregated before risk is evaluated, whether risk is evaluated dimension by dimension before aggregation, or whether dimensions are evaluated recursively in an endogenous order. My question and theirs are complementary.

The mathematical antecedent of my work is the one-dimensional theory of monotone additive statistics.\footnote{The one-dimensional result is \citet{MPST2024}. See also \citet{MuPomattoStrackTamuz2024Background} and especially \citet{Fritz2024}, whose technical apparatus is fundamental to my exercise.} I show that the entropic scalar blocks from that theory persist. The link is the shadow-valuation profile: every positive valuation \(q\) turns the vector risk into a scalar risk, and every scalar risk-sensitivity parameter evaluates that projection.

\section{Preliminaries, Mathematical and Economic}\label{sec:model}

Throughout, let \(V=\R^d\), let \(C\subseteq V\) be a closed, convex, pointed cone with nonempty interior, and fix a numeraire direction \(e\in\operatorname{int}C\).\footnote{For a set \(X \subseteq V\), \(\operatorname{int} X\) denotes the topological interior of \(X\). The choice of norm does not matter.} E.g., \(C=\R^d_+\) and \(e=(1,\ldots,1)\).

I write \(x\le_C y\) if \(y-x\in C\). For \(a,b\in V\), \(\left[a,b\right]\coloneqq\left\{v\in V\colon a\le_C v\le_C b\right\}\) is the corresponding order interval. The positive dual cone is
\[
  C^*\coloneqq\left\{q\in V^*\colon q(c)\ge0\text{ for every }c\in C\right\},
\]
where \(V^*\) is the space of linear functionals on \(V\). Since \(V=\R^d\), I identify \(q\in V^*\) with its coefficient vector and write \(q\cdot x\) for \(q(x)\). Define
\[Q\coloneqq\left\{q\in C^*\colon q(e)=1\right\}.\]
Elements of \(Q\) are positive shadow valuations, normalized so that one unit of the numeraire direction has value one. Mathematically, \(Q\) is the base of the dual cone \(C^*\) selected by the normalization \(q(e)=1\).

Fix any norm \(\|\cdot\|\) on \(V\), and let \(\|\cdot\|_*\) be the dual norm on \(V^*\).

\begin{lemma}\label[lemma]{lem:dual-base-order-unit}
The set \(Q\) is nonempty. Every \(q\in C^*\setminus\left\{0\right\}\) satisfies \(q(e)>0\), and
\[C^*=\left\{\lambda q\colon \lambda\ge0,\ q\in Q\right\}.
\]
Moreover, \(\operatorname{int}C^*\ne\varnothing\), and, for every \(v\in V\), there is \(M>0\) such that \(-Me\le_C v\le_C Me\). The order interval \(\left[-e,e\right]\) contains a neighborhood of \(0\) and so every bounded subset of \(V\) is contained in some order interval \(\left[-Me,Me\right]\).
\end{lemma}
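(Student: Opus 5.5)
The plan is to begin with the geometric fact that drives everything else: since \(e\in\operatorname{int}C\), there is \(r>0\) with \(e+B_r\subseteq C\), where \(B_r\) is the closed \(\|\cdot\|\)-ball of radius \(r\). From this I would derive the order-unit statements. For \(v\in V\) with \(\|v\|\le r\), both \(e-v\) and \(e+v\) lie in \(C\), so \(-e\le_C v\le_C e\). Hence \([-e,e]\supseteq B_r\) is a neighborhood of \(0\). Scaling, for any \(v\) I take \(M=\max\{1,\|v\|/r\}\); then \(v/M\in B_r\), so \(-Me\le_C v\le_C Me\) because \(C\) is a cone. If \(\|v\|\le R\) for every \(v\) in a bounded set, the single choice \(M=\max\{1,R/r\}\) works for all of them.

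Next I would treat positivity on \(e\). Take \(q\in C^*\setminus\{0\}\) and choose \(v\) with \(\|v\|\le r\) and \(q(v)=-\|q\|_*\,r<0\), which is possible by the definition of the dual norm in finite dimensions. Then \(e+v\in C\), so
\[
0\le q(e+v)=q(e)-r\|q\|_*,
\]
which gives \(q(e)\ge r\|q\|_*>0\). The cone decomposition follows directly. For \(q\in C^*\setminus\{0\}\), write \(q=q(e)\cdot\bigl(q/q(e)\bigr)\), where \(q/q(e)\in Q\); and \(0=0\cdot q_0\) for any \(q_0\in Q\). The reverse inclusion is immediate because \(C^*\) is a convex cone.

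It remains to show that \(Q\) is nonempty, and this is where pointedness and closedness of \(C\) enter; I expect it to be the only step needing a real tool. It suffices to produce some nonzero \(q\in C^*\) and then normalize it by the previous paragraph. Since \(C\) is closed, convex and \(C\ne V\) (because \(-e\notin C\) by pointedness, as \(e\neq0\)), I would separate the point \(-e\) from \(C\) with the finite-dimensional separating hyperplane theorem. This gives \(q\) with \(q(c)>q(-e)\) for all \(c\in C\). Because \(C\) is a cone, \(q(c)\ge0\) on \(C\), and taking \(c=0\) shows \(q\ne0\).

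For \(\operatorname{int}C^*\ne\varnothing\), I would use the standard duality: \(C\) pointed and closed implies \(C^*\) has nonempty interior. Concretely, the set \(K=C\cap\{\|x\|=1\}\) is compact. I want \(q\) with \(q>0\) on \(K\); then \(q(x)\ge\delta>0\) on \(K\) by compactness, so every \(q'\) with \(\|q'-q\|_*<\delta\) is still nonnegative on \(K\), hence on \(C\), and therefore \(q\in\operatorname{int}C^*\). To find \(q\), I would note that \(0\notin\conv K\). Otherwise \(\sum\lambda_ix_i=0\) with \(x_i\in C\setminus\{0\}\) would give \(-\lambda_1x_1=\sum_{i\ge2}\lambda_ix_i\in C\), contradicting pointedness. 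Since \(\conv K\) is compact, strict separation of \(0\) from \(\conv K\) supplies the desired \(q\). This pointedness argument is the main point of care; the rest is routine.
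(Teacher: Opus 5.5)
Your proof is correct and follows essentially the paper's route. The ball around \(e\) inside \(C\) gives the order-unit facts and \(q(e)>0\), and compactness of \(C\cap\{\|x\|=1\}\), pointedness, and strict separation of \(0\) from its convex hull give \(\operatorname{int}C^*\ne\varnothing\).

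The differences are minor. Your separate separation of \(-e\) from \(C\) is redundant: the interior point of \(C^*\) you construct is already a nonzero element of \(C^*\) that normalizes into \(Q\), which is how the paper gets \(Q\ne\varnothing\). Using that instead also removes the small ordering issue of invoking some \(q_0\in Q\) before you have shown \(Q\) is nonempty. Your quantitative bound \(q(e)\ge r\|q\|_*\) is a slightly sharper version of the paper's qualitative argument, and it yields boundedness of \(Q\) directly.
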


\begin{lemma}\label[lemma]{lem:Q-compact}
The set \(Q\) is compact and convex.
\end{lemma}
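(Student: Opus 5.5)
Convexity and closedness come directly from the definition of \(Q\); the substantive part is boundedness, which I would obtain from the fact, recorded in \Cref{lem:dual-base-order-unit}, that \([-e,e]\) contains a neighborhood of \(0\).

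\emph{Convexity.} \(C^*\) is an intersection of closed half-spaces \(\{q\colon q(c)\ge0\}\), one for each \(c\in C\), so it is a convex cone. The condition \(q(e)=1\) defines an affine hyperplane in \(V^*\). Hence \(Q\), the intersection of a convex set with an affine set, is convex.

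\emph{Closedness.} Each half-space \(\{q\colon q(c)\ge0\}\) is closed, since \(q\mapsto q(c)\) is a continuous linear functional on the finite-dimensional space \(V^*\). The same holds for the hyperplane \(\{q\colon q(e)=1\}\). An intersection of closed sets is closed, so \(Q\) is closed.

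\emph{Boundedness.} By \Cref{lem:dual-base-order-unit} there is \(r>0\) with \(\|v\|\le r\Rightarrow v\in[-e,e]\). That is, both \(e-v\in C\) and \(e+v\in C\). Fix \(q\in Q\) and \(v\) with \(\|v\|\le r\). Positivity of \(q\) on \(C\) gives \(q(e-v)\ge0\) and \(q(e+v)\ge0\), so \(|q(v)|\le q(e)=1\). Taking the supremum over such \(v\) gives \(\|q\|_*\le 1/r\) for every \(q\in Q\). Thus \(Q\) is bounded, and closed plus bounded in finite dimensions yields compactness by Heine--Borel.

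This boundedness step is the only place where the hypothesis \(e\in\operatorname{int}C\) is used, through the neighborhood statement of \Cref{lem:dual-base-order-unit}. If that neighborhood claim were not available, I would reprove it directly: \(e\in\operatorname{int}C\) gives a ball \(e+B_r\subseteq C\). Since \(B_r\) is symmetric, \(e\pm v\in C\) for all \(\|v\|\le r\), which is exactly what the argument needs.
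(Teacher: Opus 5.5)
Your proof is correct, but your boundedness step takes a different route from the paper's. The paper argues by contradiction. If \(Q\) were unbounded, it normalizes a sequence \(q_n\in Q\) with \(\|q_n\|_*\to\infty\) and extracts a convergent subsequence on the dual unit sphere. The limit is some \(q\in C^*\) with \(\|q\|_*=1\) but \(q(e)=0\), which contradicts the strict positivity \(q(e)>0\) on \(C^*\setminus\{0\}\) from \Cref{lem:dual-base-order-unit}. You instead use the other half of that lemma, namely that \([-e,e]\) contains a ball of some radius \(r\). From this you read off \(|q(v)|\le q(e)=1\) for \(\|v\|\le r\), and hence the explicit uniform bound \(\|q\|_*\le 1/r\) on \(Q\).

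What your route buys is a direct, quantitative estimate: the size of \(Q\) is controlled by how large a ball fits inside the order interval \([-e,e]\). It also never needs compactness of the dual unit sphere for the boundedness step. The same estimate therefore holds verbatim in infinite-dimensional normed spaces where \(e\) is an interior point of \(C\), with weak\(^*\) compactness then supplied by Banach--Alaoglu instead of Heine--Borel. The paper's route is qualitative and relies on finite dimensionality, but it simply recycles the strict-positivity fact it has just proved. Your computation is also exactly the one the paper later uses in \Cref{lem:profile-lipschitz} to get \(|q\cdot X-q\cdot Y|\le r\) uniformly over \(q\in Q\).
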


Economically, we have fixed the language in which we will compare multidimensional risks. The space \(V\) is the list of payoff dimensions: goods, dates, accounts, groups, or other attributes. The cone \(C\) says which deterministic changes are unambiguous improvements, and the order \(x \leq_C y\) means that moving from \(x\) to y is better in every direction recognized by that cone. The numeraire direction \(e\) is the unit in which we measure certainty equivalents. For example, for social welfare, if the numeraire is \(e=(1,\dots,1)\), then \(c\) means a uniform transfer of \(c\) to every group. For consumption over time, \(e\) might be a constant consumption stream. For portfolios, \(e\) might be the cash account or benchmark payoff. The dual cone \(C^*\) collects all linear valuations that respect the improvement order, and \(Q\) normalizes these valuations so that one unit of \(e\) has value one. Consequently, an element \(q \in Q\) is a possible system of shadow prices, welfare weights, or marginal values across dimensions.

Let \(L^\infty(V)\) denote the essentially bounded \(V\)-valued random variables, identified up to almost-sure equality. For \(X\in L^\infty(V)\), I write \(\Law(X)\) for its law, the probability measure on \(V\) induced by \(X\). Thus, for every Borel set \(A\subseteq V\), \(\Law(X)(A)=\mathbb P(X\in A)\).

All random variables live on a probability space rich enough that, whenever \(X_1,\ldots,X_n\in L^\infty(V)\) and \(\nu\) is a compactly supported Borel probability law on \(V\), there is a \(\nu\)-distributed random vector independent of \(X_1,\ldots,X_n\). Because all functionals below are law-invariant, the particular probability
space is only a device for realizing independent random variables. Equivalently, I could work directly with compactly supported laws on \(V\). Under this law-level interpretation, the sum of independent random variables with laws \(\mu\) and \(\nu\) has law \(\mu*\nu\), their convolution.

\begin{definition}Let \(\mu,\nu\) be compactly supported Borel probability measures on \(V\). Write \(\mu\le_{\mathrm{st}}\nu\)
if \(\mu(U)\le \nu(U)\) for every closed \(C\)-upper set \(U\), i.e., every closed set satisfying \(U+C\subseteq U\).

For bounded random variables \(X,Y\), write \(X\succeq_{\mathrm{st}} Y\) if \(X\) (first-order) stochastically dominates \(Y\): \(\Law(Y)\le_{\mathrm{st}}\Law(X)\).
\end{definition}

\begin{remark}
For compactly supported laws on finite-dimensional ordered spaces, this stochastic order agrees with the usual testing order against bounded continuous \(C\)-increasing functions. I use the closed-upper-set formulation to match the stochastic preorder used by \citet[Proposition 4.1 and Theorem 5.6]{Fritz2024}.
\end{remark}

A vector risk \(X\) is a random payoff vector; and its law records only the distribution of those payoffs, not the particular probability space used to realize them. For my purposes, vector stochastic dominance is the natural monotone order on such laws: \(X\) dominates \(Y\) when \(X\) places more probability on upper outcomes, where ``upper'' is defined by \(C\). The rest of the paper asks how certainty equivalents can evaluate these vector risks while respecting this order and the normalization imposed by \(e\).

\subsection{Certainty equivalents and background risk}

I assume a complete, reflexive, and transitive preference relation \(\succeq\) on \(L^\infty(V)\), writing \(\sim\) for its symmetric part. I say that a \textit{preference has an \(e\)-certainty equivalent} if for every \(X\in L^\infty(V)\) there is a unique number \(c\in\R\) such that \(X\sim ce\) and write this number as \(\Phi_\succeq(X)=c\). The \textit{\(e\)-certainty equivalent} \(\Phi_\succeq(X)\) is the amount of the numeraire direction \(e\) that is equivalent to the vector risk \(X\).

I say that \(\succeq\) is \textit{law-invariant} if \(X\stackrel{d}{=}Y\) implies \(X\sim Y\). It is \textit{cone-monotone} if
\[X\succeq_{\mathrm{st}}Y
  \quad\Longrightarrow\quad
  X\succeq Y.\]
It is \textit{invariant to independent background risk} if, whenever \(Z \in L^\infty(V)\) is independent of the pair \((X,Y)\),
\[X\succeq Y
  \quad\iff\quad
  X+Z\succeq Y+Z.\]

\begin{definition}
Let \(\mathscr E\) denote the class of complete, reflexive, and transitive preferences on \(L^\infty(V)\) that are law-invariant, cone-monotone, invariant to independent background risk, and admit \(e\)-certainty equivalents. I term an element of \(\mathscr E\) an \textit{admissible} preference.
\end{definition}

\begin{definition}
\label{def:scalar-mas}
An \textit{additive certainty functional (ACF)} is a map \(\Phi \colon L^\infty(V)\to\R\) such that:
\begin{enumerate}[label=(\roman*),noitemsep]
  \item \(\Phi\) is law-invariant;
  \item \(X\succeq_{\mathrm{st}}Y\) \(\Longrightarrow\) \(\Phi(X)\ge\Phi(Y)\);
  \item \(X,Y\) are independent \(\Longrightarrow\) \(\Phi(X+Y)=\Phi(X)+\Phi(Y)\).
\end{enumerate}
It is an \textit{additive certainty equivalent (ACE)} if \(\Phi(ce)=c\) for all \(c \in \R\).
\end{definition}

Economically, this subsection has moved us from rankings of vector risks to numerical certainty equivalents measured in the numeraire direction \(e\). The fundamental object is still a preference relation over multidimensional risks, but the existence of an \(e\)-certainty equivalent lets each risk \(X\) be summarized by the amount of the numeraire \(\Phi_\succeq(X)\), or, equivalently, by the deterministic vector \(\Phi_\succeq(X)e\), that is judged indifferent to it. The definition of an ACF specifies the properties that such a numerical summary should inherit from the underlying preference: law-invariance says that only the distribution of payoffs matters, cone-monotonicity says that vector stochastic improvements cannot lower the certainty equivalent, and additivity says that independent risks contribute separately. 

This last property is the functional counterpart of invariance to independent background risk. If adding the same independent risk to both sides never changes a comparison, then the numerical value of that independent risk must enter both sides as the same additive term. The natural question is whether replacing preferences by such functionals loses any behavioral content, or imposes anything beyond the assumptions already stated. The next proposition answers no. Once we normalize certainty equivalents by \(\Phi(ce)=c\), admissible preferences and ACEs are equivalent descriptions of the same object.

\begin{proposition}
\label[proposition]{prop:preference-mas-bridge}
The map \(\succeq\mapsto \Phi_\succeq\) is a one-to-one correspondence between \(\mathscr E\) and ACEs. The inverse sends an ACE \(\Phi\) to the preference \(\succeq_\Phi\) defined by \(X\succeq_\Phi Y\) \(\iff\) \(\Phi(X)\ge\Phi(Y)\).
\end{proposition}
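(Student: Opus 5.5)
We prove the correspondence in three stages: $\Phi_\succeq$ is an ACE for every $\succeq\in\mathscr E$; $\succeq_\Phi$ lies in $\mathscr E$ for every ACE $\Phi$; and the two maps are mutually inverse. The key fact used throughout is that a preference with $e$-certainty equivalents is represented by its certainty equivalent: $X\succeq Y$ iff $\Phi_\succeq(X)\ge\Phi_\succeq(Y)$. By transitivity, $X\succeq Y$ iff $\Phi_\succeq(X)e\succeq\Phi_\succeq(Y)e$. It therefore suffices to show that $c\mapsto ce$ is strictly increasing for $\succeq$. For $c>c'$ we have $ce-c'e=(c-c')e\in C$, so the degenerate law at $ce$ stochastically dominates that at $c'e$: every closed upper set containing $c'e$ contains $ce$. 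Cone-monotonicity then gives $ce\succeq c'e$. Indifference is excluded by uniqueness of the certainty equivalent, since $ce\sim c'e$ would make both $c$ and $c'$ certainty equivalents of $ce$. Hence $\Phi_\succeq$ represents $\succeq$.

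\textbf{$\Phi_\succeq$ is an ACE.} Law-invariance and monotonicity transfer directly through the representation. Normalization $\Phi_\succeq(ce)=c$ holds by uniqueness, since $ce\sim ce$. Additivity is the substantive step. Take $X,Y$ independent, write $a=\Phi_\succeq(X)$ and $b=\Phi_\succeq(Y)$, and proceed in two moves.
\begin{enumerate}[label=(\alph*),noitemsep]
  \item Since $X\sim ae$ and the constant $ae$ is independent of everything, background invariance applied to the pair $(X,ae)$ with $Z=Y$ gives $X+Y\sim ae+Y$.
  \item The pair $(Y,be)$ is independent of the constant $Z=ae$, so background invariance gives $Y+ae\sim be+ae=(a+b)e$.
\end{enumerate}
Transitivity yields $X+Y\sim(a+b)e$, so $\Phi_\succeq(X+Y)=a+b$. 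The only point to check is that $Y$ is independent of the pair $(X,ae)$, which holds because $ae$ is deterministic.

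\textbf{$\succeq_\Phi\in\mathscr E$.} Completeness, reflexivity and transitivity are inherited from the order on $\R$. Law-invariance and cone-monotonicity follow from properties (i) and (ii) of an ACE. For certainty equivalents, $\Phi(ce)=c$ shows that $X\sim_\Phi \Phi(X)e$, and uniqueness holds because $c\mapsto\Phi(ce)$ is injective. For background invariance, let $Z$ be independent of $(X,Y)$. Then $Z$ is independent of $X$ and of $Y$ separately, so additivity gives $\Phi(X+Z)=\Phi(X)+\Phi(Z)$ and $\Phi(Y+Z)=\Phi(Y)+\Phi(Z)$. Cancelling $\Phi(Z)$ gives the required equivalence.

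\textbf{Mutual inverses.} The certainty equivalent of $\succeq_\Phi$ is the unique $c$ with $\Phi(ce)=\Phi(X)$, namely $c=\Phi(X)$, so $\Phi_{\succeq_\Phi}=\Phi$. Conversely, $\succeq_{\Phi_\succeq}=\succeq$ is exactly the representation fact established at the start. Injectivity and surjectivity of $\succeq\mapsto\Phi_\succeq$ follow.

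The main obstacle is the additivity step, specifically ensuring that each application of background invariance uses a $Z$ independent of the relevant pair. Replacing one summand at a time by a deterministic vector is what makes this legitimate.
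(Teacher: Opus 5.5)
Your proof is correct and follows essentially the same route as the paper. Both arguments show that the numeraire line is strictly ordered, using cone-monotonicity together with uniqueness of the certainty equivalent, and conclude that $\Phi_\succeq$ represents $\succeq$. Both then obtain additivity by using background invariance to replace one independent summand at a time with its deterministic certainty equivalent, and check the converse for $\succeq_\Phi$ directly. Your explicit independence checks in the two additivity moves, and your verification that the two maps are mutually inverse, spell out details the paper leaves implicit.
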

Accordingly, admissible preferences and ACEs are two equivalent descriptions of the same normalized objects.

Having reduced admissible preferences to additive certainty equivalents, I now introduce the (scalar) evaluations that will generate all of them. A vector risk can be viewed through any normalized positive valuation \(q \in Q\). This valuation turns the multidimensional payoff \(X\) into the scalar risk \(q \cdot X\): the payoff measured in the units and tradeoffs encoded by \(q\). Once this scalar projection is fixed, we can evaluate its risk by a one-dimensional entropic certainty equivalent. The parameter \(a\) indexes scalar risk sensitivity along that valuation direction: negative \(a\) corresponds to risk aversion, \(a=0\) to risk neutrality, positive \(a\) to risk seeking, and the endpoints \(a=\pm\infty\) to worst- and best-case evaluations.

The profile I define below captures all of these scalar views of \(X\). For each pair \((a,q)\), it asks what certainty equivalent we obtain by first projecting the vector risk using \(q\) and then applying the scalar entropic evaluator indexed by \(a\). Thus, \(q\) captures the shadow valuation used to \textit{aggregate} dimensions, while \(a\) captures \textit{risk sensitivity along that aggregate}. The representation theorem will show that this profile is exhaustive: every law-invariant, cone-monotone, additive certainty equivalent is a positive mixture of these scalar entropic projections. The elementary properties established in this subsection--continuity, Lipschitzness, additivity under independent sums, and agreement with \(q \cdot x\) on deterministic vectors--are the facts needed to make that representation possible.

For a bounded real random variable \(Z\), define
\[
K_a(Z)=
\begin{cases}
\frac1a\log\E \mathrm e^{aZ}, \quad &\text{if} \quad a\in\R\setminus\{0\},\\
\E[Z], \quad &\text{if} \quad a = 0,\\
\esssup Z, \quad &\text{if} \quad a =+\infty,\\
\essinf Z, \quad &\text{if} \quad a =-\infty.
\end{cases}
\]
Let \(\overline{\R}=[-\infty,+\infty]\) be the two-point compactification of \(\R\), and define \(\Theta=\overline{\R}\times Q\). For \(X\in L^\infty(V)\), define \(\kappa_X \colon \Theta\to\R\) by \(\kappa_X(a,q)=K_a(q\cdot X)\), and note that the random variable \(q\cdot X\) is bounded for every \(q\in Q\). Let \(C(\Theta)\) denote the space of continuous real-valued functions on \(\Theta\), equipped with the \(\sup\) norm. For any bounded \(f\colon\Theta\to\R\), write \(\left\|f\right\|_\infty\coloneqq\sup_{\theta\in\Theta}\left|f(\theta)\right|\).

Next, I define the \textit{order-unit norm} \(\left\|\cdot\right\|_e\) by
\[
\left\|v\right\|_e\coloneqq\inf\left\{r>0\colon -re\le_C v\le_C re\right\}.
\]
By \Cref{lem:dual-base-order-unit}, this number is finite for every \(v\in V\). It is a norm,\footnote{Symmetry follows from the symmetric definition of the order interval, and homogeneity and the triangle inequality follow from the fact that \(C\) is a cone. If \(\left\|v\right\|_e=0\), then, for every \(n\), there is \(r_n<1/n\) such that \(-r_ne\le_C v\le_C r_ne\). Hence,\(-e/n\le_C v\le_C e/n\). Equivalently, \(v+e/n\in C\) and \(e/n-v\in C\). Since \(C\) is closed, letting \(n\to\infty\) delivers \(v\in C\) and \(-v\in C\). Pointedness produces \(v=0\).} and since \(V\) is finite-dimensional, \(\left\|\cdot\right\|_e\) induces the usual topology.

For \(X\in L^\infty(V)\), write \(\left\|X\right\|_{\infty,e}\coloneqq \esssup \left\|X\right\|_e\). For \(X,Y\in L^\infty(V)\), write \(\left\|X-Y\right\|_{\infty,e}\coloneqq \esssup \left\|X-Y\right\|_e\).

I now prove the regularity properties that make the profile usable as an object of analysis. Its economic content is simple. The order-unit norm measures the size of a perturbation in units of the numeraire: saying that \(X\) and \(Y\) differ by at most \(r\) in this norm means that, state by state, the difference can be covered by \(r\) units of \(e\) in either direction. Since every \(q \in Q\) satisfies \(q(e)=1\), no normalized positive valuation can magnify such a perturbation by more than \(r\). The entropic certainty equivalents have the same \(1\)-Lipschitz property with respect to uniform changes in their scalar argument. Accordingly, the entire profile changes by at most \(r\), uniformly over all valuation directions \(q\) and all risk-sensitivity parameters \(a\). \Cref{lem:profile-lipschitz} also says that nearby valuation directions and nearby risk sensitivities provide nearby scalar evaluations, with the endpoint parameters \(a=\pm\infty\) serving as the limiting worst- and best-case evaluations. Thus, each vector risk \(X\) generates a bounded continuous function \(\kappa_X\) on the compact parameter space \(\Theta\).

\begin{lemma}\label[lemma]{lem:profile-lipschitz}
For every \(X\in L^\infty(V)\), \(\kappa_X\) is bounded and belongs to \(C(\Theta)\). Moreover, for every \(X,Y\in L^\infty(V)\),
\(\left\|\kappa_X-\kappa_Y\right\|_\infty\le\left\|X-Y\right\|_{\infty,e}\).
\end{lemma}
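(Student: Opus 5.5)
The plan is to prove the Lipschitz estimate first and then deduce boundedness and continuity from it together with a separate argument in the parameters \((a,q)\).

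\textbf{Pointwise bounds.} Fix \(q\in Q\) and \(v\in V\). If \(-re\le_C v\le_C re\), then applying \(q\in C^*\) gives \(-r\le q\cdot v\le r\), because \(q(e)=1\). Taking the infimum over such \(r\) yields \(|q\cdot v|\le \|v\|_e\). Applied to \(X\), this shows \(|q\cdot X|\le \|X\|_{\infty,e}\) almost surely, uniformly in \(q\). Applied to \(X-Y\), it shows \(q\cdot Y-r\le q\cdot X\le q\cdot Y+r\) almost surely, where \(r=\|X-Y\|_{\infty,e}\). Finiteness of \(\|X\|_{\infty,e}\) follows from the norm equivalence noted after the definition of \(\|\cdot\|_e\).

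\textbf{Properties of \(K_a\).} Each \(K_a\) is monotone and translation-equivariant: \(K_a(Z+c)=K_a(Z)+c\), and \(Z\le W\) almost surely implies \(K_a(Z)\le K_a(W)\). For \(a\ne0\) finite, monotonicity follows because \(t\mapsto \mathrm e^{at}\) and \(s\mapsto \frac1a\log s\) are both increasing when \(a>0\) and both decreasing when \(a<0\). For \(a\in\{0,\pm\infty\}\) the two properties are immediate. Combining them with the sandwich above gives \(|\kappa_X(a,q)-\kappa_Y(a,q)|\le r\), which is the Lipschitz estimate. Since \(|K_a(Z)|\le \esssup|Z|\), we also get \(\|\kappa_X\|_\infty\le\|X\|_{\infty,e}\), so \(\kappa_X\) is bounded.

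\textbf{Continuity on \(\Theta\).} This is the main obstacle, and I would split it into two steps.

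First, continuity in \(q\) is uniform in \(a\). Since \(X\) is essentially bounded, \(|q\cdot X-q'\cdot X|\le \|q-q'\|_*\,\esssup\|X\|\). The same monotonicity and translation argument then gives \(|\kappa_X(a,q)-\kappa_X(a,q')|\le \|q-q'\|_*\,\esssup\|X\|\) for every \(a\in\overline{\R}\).

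Second, for fixed bounded \(Z\), the map \(a\mapsto K_a(Z)\) is continuous on \(\overline{\R}\).
\begin{itemize}
\item At \(a\ne0\) finite, continuity follows by dominated convergence.
\item At \(a=0\), use the expansion \(\log\E \mathrm e^{aZ}=a\E Z+O(a^2)\), valid because \(Z\) is bounded.
\item At \(a=+\infty\), set \(M=\esssup Z\). For \(\varepsilon>0\), let \(p=\mathbb P(Z>M-\varepsilon)>0\). Then \(M-\varepsilon+\frac1a\log p\le K_a(Z)\le M\) for \(a>0\), so \(K_a(Z)\to M\). The case \(a=-\infty\) is symmetric.
\end{itemize}

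Joint continuity then follows from the estimate
\[
|\kappa_X(a_n,q_n)-\kappa_X(a,q)|\le |\kappa_X(a_n,q_n)-\kappa_X(a_n,q)|+|\kappa_X(a_n,q)-\kappa_X(a,q)|.
\]
The first term vanishes by the uniform-in-\(a\) bound, and the second by continuity in \(a\) for the fixed random variable \(Z=q\cdot X\). Since \(\Theta\) is metrizable (by compactness of \(Q\) from \Cref{lem:Q-compact} and the standard metric on \(\overline{\R}\)), sequential continuity suffices, and so \(\kappa_X\in C(\Theta)\).
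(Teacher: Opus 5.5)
Your proof is correct. For boundedness and the Lipschitz estimate it matches the paper: monotonicity plus translation equivariance of each \(K_a\) is exactly the paper's step ``\(Z\le W+r\Rightarrow K_a(Z)\le K_a(W)+r\)''.

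The continuity argument takes a different route. The paper first proves \(\kappa_X\in C(\Theta)\) only for finite-valued \(X\). There it uses the tilted-mean formula \(\kappa_X(a,q)=\int_0^1(\cdots)\,dt\) to get joint continuity on \(\R\times Q\), including at \(a=0\). It then uses bounds of order \(\log(\min_i p_i)/a\) to get convergence at \(a=\pm\infty\) uniformly in \(q\). Finally, it approximates a general \(X\) uniformly by finite-valued \(X_n\), and the Lipschitz estimate in \(X\) shows that \(\kappa_X\) is a uniform limit of continuous functions.

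You instead work directly with general bounded \(X\) and separate the two parameters. The same sup-norm \(1\)-Lipschitz property of \(K_a\), applied in the \(q\)-direction, gives \(|\kappa_X(a,q)-\kappa_X(a,q')|\le\|q-q'\|_*\esssup\|X\|\) uniformly over all \(a\in\overline{\R}\). You then only need continuity of \(a\mapsto K_a(Z)\) for one fixed scalar \(Z=q\cdot X\). You get this by dominated convergence, a second-order expansion at \(0\), and a tail-probability bound at \(\pm\infty\).

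Your decomposition means you never need endpoint convergence to be uniform in \(q\). That uniformity is why the paper detours through finite-valued risks, where \(\min_i p_i>0\) supplies it; your route also avoids the discretization step entirely. The paper's route, in exchange, handles \(a\) and \(q\) jointly with one explicit formula. It reuses the Lipschitz estimate in \(X\) as the transfer mechanism, so all the genuinely analytic work is done for simple random variables.
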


The second nice property is compatibility with independent background risk. For a fixed valuation \(q\), independent vector risks \(X\) and \(Y\) become independent \textit{scalar} risks \(q \cdot X\) and \(q \cdot Y\). Entropic certainty equivalents are additive over independent scalar sums, so the profile of an independent sum is the sum of the profiles. This is the reason entropic scalar evaluations are the right elementary building blocks for additive certainty equivalents. The deterministic case is equally important. If \(x\) is non-random, then \(q \cdot x\) is a constant, and every scalar risk-sensitivity parameter \(a\) assigns that constant the same value. Thus, on deterministic outcomes, the profile remembers the valuation direction \(q\) but not the risk-sensitivity parameter \(a\). This fact will later become the barycenter condition linking the representing measure to deterministic shadow valuations.

\begin{lemma}\label[lemma]{lem:profile-additivity}
If \(X,Y\in L^\infty(V)\) are independent, then \(\kappa_{X+Y}=\kappa_X+\kappa_Y\). For deterministic \(x\in V\), \(\kappa_x(a,q)=q\cdot x\).
\end{lemma}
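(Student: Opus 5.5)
The plan is to verify additivity pointwise on \(\Theta\), splitting into cases on \(a\), and then handle the deterministic claim directly. Fix \((a,q)\in\Theta\). The first step is a reduction to the scalar case. Because \(q\) is linear, \(q\cdot(X+Y)=q\cdot X+q\cdot Y\). The scalar random variables \(U\coloneqq q\cdot X\) and \(W\coloneqq q\cdot Y\) are measurable functions of the independent vectors \(X\) and \(Y\) respectively, so they are independent and bounded. It therefore suffices to show \(K_a(U+W)=K_a(U)+K_a(W)\) for independent bounded real \(U,W\), for every \(a\in\overline{\R}\).

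For finite \(a\ne0\), independence gives \(\E \mathrm e^{a(U+W)}=\E \mathrm e^{aU}\,\E \mathrm e^{aW}\). Both factors are finite and positive because \(U\) and \(W\) are bounded. Taking \(\frac1a\log\) then splits the product into a sum. For \(a=0\), linearity of expectation gives the claim directly.

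The endpoints \(a=\pm\infty\) are the only step needing care, since independence enters through the support rather than through moments. Take \(a=+\infty\). The inequality \(\esssup(U+W)\le\esssup U+\esssup W\) holds almost surely. For the reverse, fix \(\varepsilon>0\). The events \(\{U>\esssup U-\varepsilon\}\) and \(\{W>\esssup W-\varepsilon\}\) each have positive probability, and by independence their intersection has probability equal to the product, which is positive. On that intersection \(U+W>\esssup U+\esssup W-2\varepsilon\). Since \(\varepsilon\) is arbitrary, this gives equality. The case \(a=-\infty\) follows by applying the same argument to \(-U\) and \(-W\), or symmetrically.

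Finally, let \(x\) be deterministic. Then \(q\cdot x\) is an almost-sure constant \(c\). Each \(K_a\) returns \(c\) on a constant: \(\frac1a\log \mathrm e^{ac}=c\), \(\E c=c\), and the essential sup and inf of a constant are \(c\). Hence \(\kappa_x(a,q)=q\cdot x\) for every \((a,q)\in\Theta\).
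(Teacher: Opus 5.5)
Your proof is correct and follows essentially the same route as the paper. Both reduce to independent scalar projections and use the factorization of \(\E\mathrm e^{a(U+W)}\) for finite \(a\ne0\) and linearity for \(a=0\). Both handle \(a=+\infty\) with the positive-probability intersection of near-supremum events (and \(a=-\infty\) by negation), and both get the deterministic claim from the fact that every \(K_a\) fixes constants.
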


Together, \Cref{lem:profile-lipschitz,lem:profile-additivity} show that the entropic projection profile has the same formal features as the certainty equivalents we seek to represent. It is stable as a continuous function of the risk, additive under independent sums, and on deterministic vectors it reduces to ordinary valuation by \(q\). The representation theorem in the next section uses these facts in the reverse direction: every law-invariant, cone-monotone, additive certainty equivalent can be viewed as a positive linear functional of these profiles, and, consequently, as a positive mixture of their scalar coordinates.

\section{The Main Representation}\label{sec:representation}

In the previous section, I defined the profile \((a,q)\mapsto K_a(q\cdot X)\), associating each vector risk \(X\) with its entropic projection profile. This profile captures every scalar certainty equivalent obtained by first valuing the vector payoff through a normalized positive valuation \(q\in Q\), before then applying scalar risk sensitivity \(a\) to the resulting scalar risk \(q\cdot X\). The regularity and additivity properties established above make these profiles suitable objects of analysis: \(\kappa_X\) is a continuous function on the compact parameter space \(\Theta\), it changes continuously with \(X\), it is additive under independent sums, and on deterministic vectors it reduces to ordinary valuation by \(q\).

The purpose of this section is to show that the profile is exhaustive. A law-invariant, cone-monotone certainty equivalent that is additive across independent risks has no additional degrees of freedom beyond positive mixtures of these scalar entropic projections. Equivalently, every such functional can be written as
\(\Phi(X)=\int_\Theta K_a(q\cdot X) dm(a,q)\) for some finite positive measure \(m\) on \(\Theta\). A draw \((a,q)\) from this representing measure has a simple as-if interpretation: the evaluator uses \(q\) to aggregate the dimensions of \(X\), evaluates risk in that scalar aggregate using \(K_a\), and then averages across such scalar evaluations.

The representation also separates deterministic tradeoffs from risky multidimensional evaluation. If \(\ell(x)=\Phi(x)\) denotes the deterministic restriction of \(\Phi\), then deterministic outcomes can identify only the average valuation direction in the mixture. Indeed, for deterministic \(x\), the scalar risk \(q\cdot x\) is constant, so every risk-sensitivity parameter \(a\) assigns it the same value. Thus the representing measure must satisfy the barycenter condition \(\int_\Theta q dm(a,q)=\ell\). Intuitively, deterministic choice pins down \(\ell\), while risky multidimensional choice may depend on how that deterministic valuation is decomposed into valuation directions \(q\) and scalar risk sensitivities \(a\). I first record the deterministic restriction formally before stating the representation theorem.

\subsection{Deterministic restrictions}

I state the representation theorem for ACFs rather than only for normalized ACEs, so I first separate the deterministic scale of a functional from its risky part. On deterministic vectors there is no risk to evaluate: the functional simply assigns values to certain multidimensional payoffs. Additivity over independent sums becomes additivity across certain vectors, and cone-monotonicity says that movements in the improvement cone cannot reduce value. Thus, the deterministic restriction of an ACF should be an ordinary positive valuation on \(V\).

For an ACF \(\Phi\) and deterministic \(x\in V\), define the ACF's deterministic restriction as \(\ell(x)=\Phi(x)\).
\begin{lemma}\label[lemma]{lem:detlin}
The deterministic restriction \(\ell\) is a positive linear functional, i.e., \(\ell\in C^*\).
\end{lemma}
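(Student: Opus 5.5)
Deterministic vectors are independent of everything, including each other, so additivity (iii) applies to them directly. For $x,y\in V$, viewing them as constant random variables, they are independent, hence $\ell(x+y)=\Phi(x+y)=\Phi(x)+\Phi(y)=\ell(x)+\ell(y)$. Thus $\ell$ is additive on $V$. In particular $\ell(0)=2\ell(0)$ gives $\ell(0)=0$, and $\ell(-x)=-\ell(x)$. By induction $\ell(nx)=n\ell(x)$, and so $\ell(rx)=r\ell(x)$ for rational $r$: $\ell$ is $\mathbb Q$-linear.

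Next I obtain positivity from cone-monotonicity (ii). If $c\in C$, then $0\le_C c$, and I check that the point mass $\delta_c$ dominates $\delta_0$ in $\le_{\mathrm{st}}$. Indeed, for any closed $C$-upper set $U$, if $0\in U$ then $c\in 0+C\subseteq U$, so $\delta_0(U)\le\delta_c(U)$. Hence $\ell(c)=\Phi(c)\ge\Phi(0)=\ell(0)=0$. Consequently $\ell$ is monotone: $x\le_C y$ implies $\ell(y)-\ell(x)=\ell(y-x)\ge0$.

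The only real step is passing from $\mathbb Q$-linearity to $\R$-linearity, since a priori additive maps can be pathological. I use monotonicity together with the order unit from \Cref{lem:dual-base-order-unit}. For $x\in V$ and $t\in\R$, take rationals $r_n\to t$. Since $-Me\le_C x\le_C Me$ for some $M$, we get $-|r_n-t|Mx$-type bounds. More cleanly, $\ell$ is bounded on $[-e,e]$: by monotonicity $|\ell(v)|\le\ell(e)$ for $v\in[-e,e]$. Because $[-e,e]$ contains a neighborhood of $0$, $\ell$ is bounded on a neighborhood of $0$. By $\mathbb Q$-homogeneity this gives $|\ell(v)|\le\ell(e)\|v\|_e$ for all $v$: scale $v$ by rationals slightly above $\|v\|_e$ to land in $[-e,e]$. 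So $\ell$ is Lipschitz, hence continuous, and a continuous additive map on $\R^d$ is linear; equivalently, $\ell(tx)=\lim\ell(r_nx)=\lim r_n\ell(x)=t\ell(x)$. Together with $\ell(c)\ge0$ on $C$, this shows $\ell\in C^*$.
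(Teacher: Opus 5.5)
Your proof is correct and follows the paper's argument: additivity from independence of constants, positivity on \(C\) from cone-monotonicity, and boundedness of \(\ell\) on the neighborhood \([-e,e]\) of \(0\) to upgrade additivity to linearity. Your explicit bound \(|\ell(v)|\le\ell(e)\|v\|_e\) simply spells out the paper's appeal to the fact that an additive function bounded on a neighborhood is continuous; you can delete the abandoned fragment about ``\(-|r_n-t|Mx\)-type bounds,'' which the argument after ``more cleanly'' supersedes.
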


\Cref{lem:detlin} states the deterministic benchmark that the mixture representation must match. For normalized certainty equivalents, \(\ell\in Q\) is the shadow valuation revealed by choices among sure vectors. For an unnormalized ACF, \(\ell(e)\) is the scale of the functional; in the representation below, it will become the total mass of the representing measure.

This also explains (getting ahead of ourselves slightly) why the representation theorem contains a barycenter condition. If a measure \(m\) represents \(\Phi\), then a deterministic vector \(x\) is evaluated as
\[
    \int_\Theta K_a(q\cdot x) dm(a,q)
    =
    \int_\Theta q\cdot x dm(a,q)
    =
    \left(\int_\Theta q dm(a,q)\right)\cdot x ,
\]
because every scalar entropic certainty equivalent fixes constants. Thus, matching the deterministic restriction requires \(\int_\Theta q dm(a,q)=\ell\). The representation theorem says that, apart from matching the deterministic barycenter \(\ell\), an ACF is exactly a positive mixture of the scalar entropic projections in the profile.

\subsection{Representation theorem}

I can now state the representation. I first formulate the result for ACFs, rather than only for normalized certainty equivalents, because the unnormalized version cleanly separates scale from shape. A finite positive measure on \(\Theta\) assigns weight to scalar evaluators indexed by pairs \((a,q)\). The coordinate \(q\) selects a normalized positive valuation of the vector payoff, while \(a\) selects the scalar entropic risk sensitivity applied after that valuation. The total mass of the measure records the scale of the functional; when the functional is a normalized certainty equivalent, this total mass is one.

\begin{restatable}{theorem}{maintheorem}\label{thm:scalar-main}
Let \(\Theta=\overline{\R}\times Q\), and let \(\Phi\colon L^\infty(V)\to\R\) be an ACF, with the deterministic restriction \(\ell(x)=\Phi(x)\). Then there exists a finite positive Borel measure \(m\) on \(\Theta\) such that \(\Phi(X)=\int_\Theta K_a(q\cdot X) dm(a,q)\) for every \(X\in L^\infty(V)\). Moreover, \(\int_\Theta q dm(a,q)=\ell\) so that if \(\ell(e)=1\), then \(m\) is a probability measure.

Conversely, every finite positive Borel measure \(m\) on \(\Theta\) defines an ACF by \(\Phi_m(X)=\int_\Theta K_a(q\cdot X) dm(a,q)\), with the deterministic restriction \(\ell_m=\int_\Theta q dm(a,q)\), so that \(\Phi_m\) is an ACE if and only if \(m\) is a probability measure.\end{restatable}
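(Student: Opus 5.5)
The converse direction is routine and I will dispose of it first. Fix a finite positive measure \(m\) on \(\Theta\) and set \(\Phi_m(X)=\int_\Theta \kappa_X\,dm\), which is well defined because \(\kappa_X\in C(\Theta)\) is bounded by \Cref{lem:profile-lipschitz}. The three ACF properties then follow one by one:
\begin{itemize}
\item Law-invariance holds because each \(K_a(q\cdot X)\) depends only on \(\Law(X)\).
\item Additivity over independent sums follows from \Cref{lem:profile-additivity} by integrating.
\item For monotonicity, if \(X\succeq_{\mathrm{st}}Y\), then for every \(q\in Q\) the map \(x\mapsto q\cdot x\) is \(C\)-increasing. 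Hence \(q\cdot X\) scalar-dominates \(q\cdot Y\): the preimage of \([t,\infty)\) is a closed \(C\)-upper set. Each \(K_a\) is monotone in scalar first-order dominance, so \(\kappa_X\ge\kappa_Y\) pointwise.
\end{itemize}
The deterministic restriction is \(\int q\cdot x\,dm=(\int q\,dm)\cdot x\), again by \Cref{lem:profile-additivity}. Since \(q(e)=1\) on \(Q\), we get \(\Phi_m(ce)=c\,m(\Theta)\), so \(\Phi_m\) is an ACE exactly when \(m(\Theta)=1\). The barycenter claim in the forward direction follows from the same computation once a representing \(m\) exists, and \(\ell(e)=m(\Theta)\) gives the probability-measure statement.

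For the forward direction, the plan is to show that \(\Phi\) factors through the profile, \(\Phi(X)=\Lambda(\kappa_X)\), for a positive linear functional \(\Lambda\) on \(C(\Theta)\), and then to apply Riesz–Markov. The essential input, and the main obstacle, is a \emph{monotonicity-through-profiles} statement. It says that if \(\kappa_X\ge\kappa_Y\) pointwise on \(\Theta\) (perhaps up to slack \(\varepsilon\)), then \(\Phi(X)\ge\Phi(Y)\) (perhaps up to \(\varepsilon\,\ell(e)\)). I would derive this from Fritz's vector-valued theory of the asymptotic and catalytic stochastic order \citep[Proposition 4.1, Theorem 5.6]{Fritz2024}. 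In the setting of compactly supported laws on \(V\) ordered by \(C\), the monoid homomorphisms to \(\R\) that are monotone with respect to \(\le_{\mathrm{st}}\) are built from the \(K_a\circ q\). Fritz's theorem therefore says that strict pointwise dominance \(\kappa_X>\kappa_Y\) (on the relevant spectrum) implies, for large \(n\), that \(X_1+\dots+X_n+Z\succeq_{\mathrm{st}}Y_1+\dots+Y_n+Z\) for some independent catalyst \(Z\). Applying \(\Phi\), and using additivity and monotonicity, gives \(n\Phi(X)\ge n\Phi(Y)\). Non-strict inequalities are handled by adding \(\varepsilon e\) to \(X\). This shifts \(\kappa_X\) by exactly \(\varepsilon\) and \(\Phi(X)\) by \(\varepsilon\ell(e)\), using \Cref{lem:detlin}, and then one lets \(\varepsilon\to0\). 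The delicate part is checking that Fritz's spectrum of monotone homomorphisms is exactly \(\Theta\), including the endpoints \(a=\pm\infty\) and the degenerate \(q\in C^*\) rescaled into \(Q\) via \Cref{lem:dual-base-order-unit}, and that his hypotheses (an order unit, and power-universal elements given by \(e\)) hold. Here I would lean on \(e\in\operatorname{int}C\).

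Once monotonicity through profiles is in hand, \(\Phi\) depends only on \(\kappa_X\), and it is \(\ell(e)\)-Lipschitz in \(\|\cdot\|_\infty\) by the \(\varepsilon e\) argument. The set \(\mathcal K=\{\kappa_X\}\) is closed under addition, by \Cref{lem:profile-additivity} with independent copies, and contains the constants \(c\cdot\mathbf 1\), via \(X=ce\). Define \(\Lambda(\kappa_X)=\Phi(X)\) on the cone of differences. Additivity of \(\Phi\) makes \(\Lambda(f-g)\coloneqq\Phi(X)-\Phi(Y)\) well defined on the group \(\mathcal K-\mathcal K\), and it extends to the rational span. Monotonicity makes it positive, that is, nonnegative on nonnegative elements, and \(\mathbf 1\) is an order unit. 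By the Kantorovich/Hahn–Banach extension for positive functionals dominated via an order unit (\(\Lambda(f)\le\ell(e)\sup f\)), \(\Lambda\) extends to a positive linear functional on \(C(\Theta)\). Riesz–Markov then yields a finite positive Borel \(m\) with \(\Phi(X)=\int\kappa_X\,dm\). Density of the span in \(C(\Theta)\) is not needed, since any positive extension works.
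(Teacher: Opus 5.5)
Your proposal is correct and follows essentially the same route as the paper. The \(\varepsilon e\) shift turns profile dominance into the strict exponential, mean, and support tests of \citet[Theorem 5.6]{Fritz2024}, after writing each nonzero \(t\in C^*\) as \(t(e)\,q\) with \(q\in Q\); this yields a catalyst and hence monotonicity of \(\Phi\) in the profile order. From there, \(\Phi\) induces a positive, \(\ell(e)\)-Lipschitz functional on the span of profiles, which is extended positively to \(C(\Theta)\) and represented via Riesz--Markov--Kakutani.

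Two small points. The \(n\)-fold copies are unnecessary, since the strict converse already gives a catalyst directly. And before invoking a real-linear Hahn--Banach or Kantorovich extension, you should pass from the rational span to its sup-norm closure using your Lipschitz bound, which is exactly the step the paper makes explicit.
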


The first part of the theorem says that the entropic projection profile is exhaustive. An ACE has no additional multidimensional risk component beyond positive mixtures of the scalar evaluations \(X\mapsto K_a(q\cdot X)\). To evaluate \(X\), the representation first views the vector payoff through a normalized positive valuation \(q\), producing the scalar risk \(q\cdot X\). It then evaluates this scalar risk using the entropic certainty equivalent \(K_a\). Finally, it aggregates these scalar evaluations according to the representing measure \(m\). For normalized certainty equivalents, \(m\) is a probability measure, so this aggregation can be understood literally as averaging over as-if latent scalar evaluators. For a general ACF, the same representation holds with an arbitrary positive scale.

The two coordinates of \(\Theta\) have distinct economic roles. The valuation \(q\) determines which aggregate of the vector payoff is being evaluated: prices across goods, welfare weights across groups, date valuations across time, or more generally a normalized shadow valuation compatible with the cone \(C\). The parameter \(a\) determines how risk in that scalar aggregate is treated. Negative \(a\)s are risk-averse, \(a=0\) is risk-neutral, positive \(a\)s are risk-seeking, and the endpoints \(a=\pm\infty\) are worst- and best-case evaluations. Accordingly, the representation decomposes multidimensional risky evaluation into two familiar one-dimensional ingredients: a valuation direction and a risk attitude along that direction.

The barycenter condition is the link back to deterministic choice. If \(x\in V\) is deterministic, then \(q\cdot x\) is a constant, and every scalar entropic certainty equivalent fixes constants:
\(K_a(q\cdot x)=q\cdot x\). Therefore, the risk-sensitivity coordinate \(a\) disappears on deterministic outcomes, and the representing measure affects deterministic choice only through the average of its \(q\)-coordinates.

Deterministic choices identify the shadow valuation \(\ell\), but they do not identify how \(\ell\) is decomposed into valuation directions \(q\) or how much weight is placed on different scalar risk sensitivities \(a\). The representation describes all ways of filling in this missing risky structure consistent with the benchmark requirements.

The converse direction is equally important. It says that the formula is not merely necessary, but is also sufficient. Any finite positive mixture of scalar entropic projections is automatically law-invariant, cone-monotone, and additive across independent sums. If the measure has total mass one, the functional is normalized in the numeraire direction and, hence, is an ACE.

The representing measure in \Cref{thm:scalar-main} need not be unique. The simplest redundancy comes from the mass assigned to \(\left\{0\right\}\times Q\). Since \(K_0(q\cdot X)=q\cdot\E X\), this part of the representation depends only on the barycenter of that mass, not on how the mass is distributed across \(Q\).

\begin{corollary}\label[corollary]{cor:preference-representation}
A preference \(\succeq\in\mathscr E\) admits a representation by a probability measure \(m\in\mathcal P(\Theta)\):\footnote{Here and below, \(\mathcal P(S)\) denotes the Borel probability measures on \(S\).} \(\Phi_\succeq(X)=\int_\Theta K_a(q\cdot X)dm(a,q)\). Its deterministic valuation is \(\int_\Theta qdm(a,q)\). Conversely, every \(m\in\mathcal P(\Theta)\) defines a preference in \(\mathscr E\) by comparing
these integrals.
\end{corollary}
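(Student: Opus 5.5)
The corollary follows by composing \Cref{prop:preference-mas-bridge} with \Cref{thm:scalar-main}. The plan is first to pass from a preference to its certainty equivalent, then to represent that equivalent by a measure, and finally to run the argument in reverse for the converse.

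For the forward direction, fix \(\succeq\in\mathscr E\). By \Cref{prop:preference-mas-bridge}, \(\Phi_\succeq\) is an ACE; in particular it is an ACF with \(\Phi_\succeq(ce)=c\). Apply \Cref{thm:scalar-main} to \(\Phi_\succeq\). This produces a finite positive Borel measure \(m\) on \(\Theta\) such that \(\Phi_\succeq(X)=\int_\Theta K_a(q\cdot X)\,dm(a,q)\) for every \(X\), and such that \(\int_\Theta q\,dm=\ell\), where \(\ell\) is the deterministic restriction of \(\Phi_\succeq\). The normalization gives \(\ell(e)=\Phi_\succeq(e)=1\). Since every \(q\in Q\) satisfies \(q(e)=1\), we get
\[m(\Theta)=\int_\Theta q(e)\,dm=\ell(e)=1.\]
This is exactly the clause of the theorem asserting that \(m\) is a probability measure. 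The deterministic valuation of \(\succeq\) is, by definition, the restriction \(x\mapsto\Phi_\succeq(x)\), which equals \(\ell=\int_\Theta q\,dm\).

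For the converse, take any \(m\in\mathcal P(\Theta)\). The converse half of \Cref{thm:scalar-main} shows that \(\Phi_m\) is an ACF, and it is an ACE precisely because \(m\) has total mass one. The inverse map in \Cref{prop:preference-mas-bridge} then sends \(\Phi_m\) to the preference \(\succeq_{\Phi_m}\), defined by \(X\succeq Y\iff\Phi_m(X)\ge\Phi_m(Y)\), and this preference lies in \(\mathscr E\). Because the correspondence is one-to-one, \(\Phi_{\succeq_{\Phi_m}}=\Phi_m\). Its deterministic valuation is therefore \(\ell_m=\int_\Theta q\,dm\).

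No step is a genuine obstacle; the only point needing care is bookkeeping. We must verify that the normalization \(\Phi(ce)=c\) is exactly what forces total mass one. We must also confirm that the map from preferences to certainty equivalents and its inverse compose to the identity, so that the representation attaches to the original preference \(\succeq\) and not merely to some preference equivalent to it.
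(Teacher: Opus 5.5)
Your proposal is correct and follows essentially the same route as the paper. You pass to the ACE \(\Phi_\succeq\) via \Cref{prop:preference-mas-bridge}, apply \Cref{thm:scalar-main}, and use \(\ell(e)=\Phi_\succeq(e)=1\) with \(q(e)=1\) on \(Q\) to get \(m(\Theta)=1\); for the converse you invoke the theorem's converse half and the inverse map of the bridge, where the paper instead checks the ACE properties of \(\Phi_m\) directly, which is the same argument.
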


In this light, the theorem identifies a class of mixture representations, not a unique structural distribution of latent types. Risky choices may distinguish among decompositions of the deterministic valuation that are invisible on certain outcomes, but the measure \(m\) is an as-if representing measure unless we impose additional identifying restrictions.

\subsection{Proof Sketch}

I now provide the proof idea behind \Cref{thm:scalar-main}, leaving the auxiliary results and details to \Cref{proofofrepthm}. The only difficult direction of \Cref{thm:scalar-main} is the representation of an arbitrary ACF \(\Phi\). The zero-scale case is immediate: if the deterministic restriction satisfies \(\ell(e)=0\), then \(\ell=0\), and monotonicity together with the fact that every bounded risk is order-bounded between \(-Me\) and \(Me\) forces \(\Phi\equiv 0\). Thus, the interesting case is \(\ell(e)>0\).

The first step is a monotonicity upgrade. ACFs are by definition monotone for vector stochastic dominance, but the representation requires monotonicity with respect to \textit{pointwise} dominance of profiles. Suppose \(\kappa_X\geq \kappa_Y\). Adding \(\varepsilon e\) to \(X\) shifts every scalar projection by \(\varepsilon\), because \(q(e)=1\). This strict shift lets me use the powerful machinery developed by \citet{Fritz2024}; in particular, his catalytic stochastic-order theorem \citep[Theorem 5.6]{Fritz2024}: for every \(\varepsilon>0\), there is an independent bounded background risk \(Z\) such that
\(X+\varepsilon e+Z \succeq_{\mathrm{st}} Y+Z\).

Monotonicity and independent additivity of \(\Phi\) then provide
\[
    \Phi(X)+\varepsilon\ell(e)+\Phi(Z)
    =
    \Phi(X+\varepsilon e+Z)
    \geq
    \Phi(Y+Z)
    =
    \Phi(Y)+\Phi(Z),
\]
and, letting \(\varepsilon\downarrow0\), 
\[
    \kappa_X\geq \kappa_Y
    \quad\Longrightarrow\quad
    \Phi(X)\geq \Phi(Y).
\]
Consequently, every ACF is monotone not only for stochastic dominance of vector risks, but also for pointwise dominance of their entropic projection profiles.

The second step turns \(\Phi\) into a positive linear functional on a space of profile functions. Let \(\mathcal S\) be the additive semigroup generated by profiles:
\[
    \mathcal S
    =
    \left\{
    \kappa_{X_1}+\cdots+\kappa_{X_n}\colon
    n\geq0,\ X_i\in L^\infty(V)
    \right\}.
\]
Because profiles are additive under independent sums, each element of \(\mathcal S\) is itself the profile of an independent sum. Define
\[
    F\left(\sum_{i=1}^n \kappa_{X_i}\right)
    =
    \sum_{i=1}^n \Phi(X_i).
\]
Thanks to the monotonicity upgrade, this is well-defined. Moreover, \(F\) is monotone, and I can deduce a bound that allows me to extend \(F\) from the semigroup of profiles to the closed real linear span of those profiles inside \(C(\Theta)\). Note that this extension is positive, and the span contains the constant functions because \(\kappa_{ce}\) is the constant function \(c\). I then extend this functional to all of \(C(\Theta)\), before concluding--via the Riesz-Markov-Kakutani representation theorem--that there is a finite positive Borel measure \(m\) on \(\Theta\) such that
\[
    \Phi(X)
    =
    \int_\Theta \kappa_X(a,q) dm(a,q)
    =
    \int_\Theta K_a(q\cdot X) dm(a,q)
\]
for every \(X\in L^\infty(V)\).

In turn, I obtain the barycenter condition by evaluating the representation on deterministic vectors. If \(x\in V\), then \(q\cdot x\) is constant and every \(K_a\) fixes constants, so
\[
    \ell(x)
    =
    \Phi(x)
    =
    \int_\Theta q\cdot x dm(a,q)
    =
    \left(\int_\Theta q dm(a,q)\right)\cdot x .
\]
Since this holds for every \(x\), we have \(\int_\Theta q dm(a,q)=\ell\), and by evaluating at \(e\), \(m(\Theta)=\ell(e)\), because \(q(e)=1\) for every \(q\in Q\).

The converse direction of the theorem is straightforward. Given a finite positive Borel measure \(m\) on \(\Theta\), the formula \(\Phi_m(X)=\int_\Theta K_a(q\cdot X) dm(a,q)\) defines a finite-valued functional because profiles are bounded. Law-invariance is inherited from the scalar evaluations. Additivity follows from the additivity of \(K_a\) over independent scalar sums. Cone-monotonicity follows because every \(q\in Q\) is \(C\)-positive, so vector stochastic dominance of \(X\) over \(Y\) implies scalar first-order stochastic dominance of \(q\cdot X\) over \(q\cdot Y\), and each \(K_a\) is monotone for scalar first-order stochastic dominance. Finally, on deterministic vectors,
\(\Phi_m(x)
    =
    \left(\int_\Theta q dm(a,q)\right)\cdot x\), and \(\Phi_m(ce)=cm(\Theta)\). All in all, \(\Phi_m\) is an ACE exactly when \(m(\Theta)=1\).

\section{Which Comparisons Are Robust?}\label{sec:robust}

\Cref{thm:scalar-main} and \Cref{cor:preference-representation} describe one evaluator at a time. I now ask which comparisons survive when the representing measure is varied. The first question holds deterministic tradeoffs fixed. If choices among certain vectors identify \(\ell\in Q\), then the normalized risky evaluations consistent with this deterministic valuation are represented by measures in
\[
  \mathcal M_\ell\coloneqq
  \left\{
  m\in\mathcal P(\Theta)\colon
  \int_\Theta q dm(a,q)=\ell
  \right\}.
\]
Deterministic choices restrict the compatible risky evaluations to those represented by measures in \(\mathcal M_\ell\), but they do not choose among these evaluations.

The distinction matters even in the simplest two-dimensional example. Let \(V=\R^2\), \(C=\R_+^2\), \(e=(1,1)\), and \(Q=\Delta_1\) (the \(1\)-simplex). Fix \(\ell=(1/2,1/2)\), let \(Y=(1/2,1/2)\) be deterministic, and let
\[
  X=
  \begin{cases}
  (1,0),&\text{with probability }1/2,\\
  (0,1),&\text{with probability }1/2.
  \end{cases}
\]
For any \(a\in\overline{\R}\), let \(m=\delta_{(a,\ell)}\). Then \(m\in\mathcal M_\ell\). This measure aggregates first: since \(\ell\cdot X=\ell\cdot Y=1/2\), it ranks the two risks equally, \(\Phi_m(X)=\Phi_m(Y)\).

Now fix finite \(a<0\), and let
\[
  m'=
  \frac12\delta_{\left(a,\left(1,0\right)\right)}
  +
  \frac12\delta_{\left(a,\left(0,1\right)\right)}.
\]
Again \(m'\in\mathcal M_\ell\), so deterministic tradeoffs are unchanged. But this measure evaluates the two coordinate risks separately and then averages. Under each coordinate projection, \(X\) is a nondegenerate Bernoulli random variable with mean \(1/2\), while \(Y\) is constant at \(1/2\). Since \(a<0\), Jensen's inequality implies \(K_a(q\cdot X)<1/2=K_a(q\cdot Y)\) for \(q=(1,0)\) and \(q=(0,1)\) and so \(\Phi_{m'}(X)<\Phi_{m'}(Y)\). Of course, for (finite) \(a>0\), the same construction reverses the strict inequality.

The measures \(m\) and \(m'\) have the same barycenter and, therefore, the same deterministic restriction. Their disagreement concerns whether risk is evaluated after aggregation, or separately along more concentrated valuation directions. This motivates the fixed-\(\ell\) robust order: which comparisons survive every measure in \(\mathcal M_\ell\)?

There is also a natural unrestricted question: which comparisons survive every \(\succeq\in\mathcal E\), allowing deterministic valuations to vary? In that case, each Dirac measure \(\delta_{(a,q)}\) represents the ACE \(W\mapsto K_a(q\cdot W)\), and, as a result, a preference in \(\mathcal E\) with deterministic valuation \(q\). The common order then becomes pointwise dominance of the entropic projection profile.

For \(\ell\in Q\), define
\[
\mathscr E_\ell\coloneqq
\left\{
\succeq\in\mathscr E\colon \Phi_\succeq(x)=\ell(x)\ \text{for every }x\in V
\right\}.
\]
\begin{definition}
For \(X,Y\in L^\infty(V)\), define \(X\succeq_\kappa Y\) if \(\kappa_X(a,q)\ge\kappa_Y(a,q)\) for every \((a,q)\in\Theta\).
\end{definition}

\begin{theorem}\label{thm:profile-order}
For \(X,Y\in L^\infty(V)\), the following are equivalent:
\begin{enumerate}[noitemsep,label=(\roman*)]
  \item\label{profile-orderi} \(X\succeq_\kappa Y\);
  \item\label{profile-orderii} \(\Phi(X)\ge\Phi(Y)\) for every ACF \(\Phi\);
  \item\label{profile-orderiii} \(X\succeq Y\) for every \(\succeq\in\mathscr E\).
\end{enumerate}
For each fixed \(\ell\in Q\),
\[
X\succeq Y\ \text{for every }\succeq\in\mathscr E_\ell
\quad \iff \quad
\int_\Theta\left(\kappa_X-\kappa_Y\right)dm\ge0\ \text{for every }m\in\mathcal M_\ell.
\]
\end{theorem}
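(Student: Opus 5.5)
The plan is to route everything through \Cref{thm:scalar-main} and \Cref{prop:preference-mas-bridge}, proving the cycle (i)\(\Rightarrow\)(ii)\(\Rightarrow\)(iii)\(\Rightarrow\)(i) and then the fixed-\(\ell\) statement separately.

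For (i)\(\Rightarrow\)(ii), take an ACF \(\Phi\). By \Cref{thm:scalar-main} there is a finite positive Borel measure \(m\) on \(\Theta\) with \(\Phi(W)=\int_\Theta\kappa_W\,dm\) for every \(W\). If \(\kappa_X\ge\kappa_Y\) pointwise, integrating against \(m\ge0\) gives \(\Phi(X)\ge\Phi(Y)\). (Equivalently, one may invoke the monotonicity upgrade from the proof sketch, but the representation makes it immediate.) For (ii)\(\Rightarrow\)(iii), take \(\succeq\in\mathscr E\). By \Cref{prop:preference-mas-bridge}, \(\Phi_\succeq\) is an ACE, hence an ACF, and \(\succeq\) coincides with \(\succeq_{\Phi_\succeq}\). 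So \(\Phi_\succeq(X)\ge\Phi_\succeq(Y)\) yields \(X\succeq Y\). For (iii)\(\Rightarrow\)(i), fix \((a,q)\in\Theta\). By the converse half of \Cref{thm:scalar-main}, the Dirac mass \(\delta_{(a,q)}\) is a probability measure, so \(W\mapsto K_a(q\cdot W)\) is an ACE. By \Cref{prop:preference-mas-bridge} its induced preference lies in \(\mathscr E\). Then (iii) gives \(K_a(q\cdot X)\ge K_a(q\cdot Y)\), and since \((a,q)\) was arbitrary, \(X\succeq_\kappa Y\).

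For the fixed-\(\ell\) claim, I will first show that \(\{\Phi_\succeq\colon\succeq\in\mathscr E_\ell\}\) is exactly \(\{\Phi_m\colon m\in\mathcal M_\ell\}\). Given \(\succeq\in\mathscr E_\ell\), \(\Phi_\succeq\) is an ACE with deterministic restriction \(\ell\). \Cref{thm:scalar-main} then supplies a representing \(m\) with \(\int q\,dm=\ell\), and since \(\ell(e)=1\), \(m\) is a probability measure, so \(m\in\mathcal M_\ell\). Conversely, for \(m\in\mathcal M_\ell\), the converse part of \Cref{thm:scalar-main} says \(\Phi_m\) is an ACE with \(\ell_m=\int q\,dm=\ell\), and \Cref{prop:preference-mas-bridge} turns it into a preference \(\succeq_{\Phi_m}\in\mathscr E_\ell\) whose certainty equivalent is \(\Phi_m\). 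Both directions of the equivalence then follow by writing \(X\succeq Y\) as \(\Phi_m(X)\ge\Phi_m(Y)\), i.e.\ \(\int(\kappa_X-\kappa_Y)\,dm\ge0\).

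The main point needing care is the non-uniqueness of the representing measure: a given \(\succeq\in\mathscr E_\ell\) may have several representing measures. The argument only needs existence of one representing measure in \(\mathcal M_\ell\) for each preference, together with the fact that every \(m\in\mathcal M_\ell\) induces some preference in \(\mathscr E_\ell\), so the ambiguity does no harm. The other check is that \(\Phi_{\succeq_\Phi}=\Phi\) for an ACE \(\Phi\), which is part of the bijection in \Cref{prop:preference-mas-bridge}. The substantive work, namely the catalytic-order monotonicity upgrade, is already absorbed into \Cref{thm:scalar-main}, so no real obstacle remains.
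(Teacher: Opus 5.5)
Your proposal is correct and follows essentially the same route as the paper. The paper proves (i)\(\iff\)(ii) and (i)\(\iff\)(iii) as two separate biconditionals rather than your cycle (i)\(\Rightarrow\)(ii)\(\Rightarrow\)(iii)\(\Rightarrow\)(i), but it uses the same ingredients: integration against the representing measure from \Cref{thm:scalar-main}, Dirac measures \(\delta_{(a,q)}\) to isolate individual profile coordinates, and \Cref{prop:preference-mas-bridge} to pass between preferences and ACEs. Your fixed-\(\ell\) argument, including how you handle the non-uniqueness of the representing measure, matches the paper's proof.
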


The unrestricted order is stronger than the fixed-\(\ell\) order. It asks for unanimity across all compatible deterministic valuations and decompositions of those valuations. Since every scalar coordinate \((a,q)\) can be isolated by a Dirac representing measure, unrestricted unanimity is exactly pointwise dominance of \(\kappa\). In contrast, the Dirac measure \(\delta_{(a,q)}\) belongs to \(\mathcal M_\ell\) if and only if \(q=\ell\).

The unrestricted profile order also has a background-risk interpretation. A comparison is catalytically justified if, after adding an independent multidimensional background risk, ordinary vector stochastic dominance ranks \(X\) above \(Y\). The next theorem says that the unrestricted robust order is the closure of the order generated in this way.

\begin{definition}
Write \(X\succeq_{\cat}Y\) if there exists \(Z \in L^\infty(V)\), independent of the pair \((X,Y)\), such that
\(X+Z\succeq_{\mathrm{st}}Y+Z\). For fixed \(X\), define its \textit{catalytic lower set} by
\[
\mathcal L_{\cat}(X)\coloneqq
\left\{
Y'\in L^\infty(V)\colon X\succeq_{\cat}Y'
\right\}.
\]
\end{definition}

\begin{theorem}\label{thm:unanimous-background}
For \(X,Y\in L^\infty(V)\), \(X\succeq_{\cat}Y\Longrightarrow X\succeq_\kappa Y\).
Conversely,
\[
X\succeq_\kappa Y\quad\Longrightarrow\quad X+\varepsilon e\succeq_{\cat}Y\ \text{for every }\varepsilon>0.
\]
Finally,
\[
X\succeq_\kappa Y
\quad \iff \quad
Y\in\overline{\mathcal L_{\cat}(X)}^{\|\cdot\|_{\infty,e}}.
\]
Consequently,
\[
X\succeq Y\ \text{for every }\succeq \in \mathscr E
\quad \iff \quad
Y\in\overline{\mathcal L_{\cat}(X)}^{\|\cdot\|_{\infty,e}}.
\]
\end{theorem}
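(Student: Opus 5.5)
The proof has four parts, one for each assertion of \Cref{thm:unanimous-background}.

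\textbf{Step 1: catalytic dominance implies profile dominance.} Suppose \(X+Z\succeq_{\mathrm{st}}Y+Z\) with \(Z\) independent of \((X,Y)\). Fix \((a,q)\in\Theta\) and use the Dirac ACF \(W\mapsto K_a(q\cdot W)\), which is an ACF by the converse half of \Cref{thm:scalar-main}. Monotonicity gives \(\kappa_{X+Z}(a,q)\ge\kappa_{Y+Z}(a,q)\). Since \(Z\) is independent of \(X\) and of \(Y\) separately, \Cref{lem:profile-additivity} turns both sides into sums, \(\kappa_X(a,q)+\kappa_Z(a,q)\ge\kappa_Y(a,q)+\kappa_Z(a,q)\). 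Every quantity is finite, so we can cancel \(\kappa_Z(a,q)\). As \((a,q)\) was arbitrary, \(X\succeq_\kappa Y\).

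\textbf{Step 2: profile dominance implies catalytic dominance after an \(\varepsilon e\) shift.} This is the monotonicity-upgrade step from the proof sketch of \Cref{thm:scalar-main}. If \(\kappa_X\ge\kappa_Y\), then \(\kappa_{X+\varepsilon e}=\kappa_X+\varepsilon\) because \(q(e)=1\). So the shifted profile dominates \(\kappa_Y\) strictly and uniformly over the compact set \(\Theta\). I would invoke \citet[Theorem 5.6]{Fritz2024} directly, as the sketch of \Cref{thm:scalar-main} does, to get a bounded independent \(Z\) with \(X+\varepsilon e+Z\succeq_{\mathrm{st}}Y+Z\). That is exactly \(X+\varepsilon e\succeq_{\cat}Y\).

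The main obstacle is checking that Fritz's hypotheses are met. His strict-domination condition is stated over the monotone homomorphisms of the ordered semiring of compactly supported laws on \((V,C)\), and it must be matched to strict pointwise dominance of \(\kappa\) on \(\Theta\). I expect this matching, including the degenerate/endpoint homomorphisms \(a=\pm\infty\) (ess sup/inf of projections) and the possible appearance of nonlinear or ``tropical'' homomorphisms, to be the delicate point. I would reuse whatever auxiliary lemma the appendix proof of \Cref{thm:scalar-main} establishes for this. A further nuisance is that the catalyst \(Z\) must be independent of the pair \((X,Y)\), whereas Fritz works with laws. Richness of the probability space lets me realize a copy of \(Z\) independent of \((X,Y)\), and law-invariance makes this harmless.

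\textbf{Step 3: the closure characterization.} (\(\Leftarrow\)) Let \(Y_n\to Y\) in \(\|\cdot\|_{\infty,e}\) with \(X\succeq_{\cat}Y_n\). Step 1 gives \(\kappa_X\ge\kappa_{Y_n}\). By \Cref{lem:profile-lipschitz}, \(\kappa_{Y_n}\to\kappa_Y\) uniformly, so \(\kappa_X\ge\kappa_Y\). (\(\Rightarrow\)) Given \(\kappa_X\ge\kappa_Y\), set \(Y_\varepsilon=Y-\varepsilon e\). Catalytic dominance is translation-invariant: add \(-\varepsilon e\) to both sides of the dominance produced by Step 2, using that \(\succeq_{\mathrm{st}}\) is invariant under deterministic shifts because upper sets shift to upper sets. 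This gives \(X\succeq_{\cat}Y_\varepsilon\). Since \(\|Y_\varepsilon-Y\|_{\infty,e}=\varepsilon\), \(Y\) lies in the closure of \(\mathcal L_{\cat}(X)\).

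\textbf{Step 4: the unanimity consequence.} This is just the combination of the closure characterization with the equivalence (i)\(\iff\)(iii) of \Cref{thm:profile-order}.
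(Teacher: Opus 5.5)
Your proposal is correct and follows the paper's proof essentially step for step: Step 1 is the paper's projection-plus-cancellation argument (you route monotonicity through the Dirac ACF from the converse of \Cref{thm:scalar-main}, whose monotonicity is proved by the same projection argument), Step 2 is exactly \Cref{lem:catalytic-bridge}, and Steps 3--4 match the paper's closure argument via the shift $Y-\varepsilon e$ combined with \Cref{thm:profile-order}. The verification you flag as delicate is routine here, because the specialized Fritz tests range only over nonzero $t\in C^*$. Writing $t=t(e)\,q$ with $q\in Q$ turns each test into strict dominance of $\kappa_{X+\varepsilon e}$ over $\kappa_Y$ at $(\pm t(e),q)$, $(0,q)$, or $(\pm\infty,q)$, so no other homomorphisms arise.
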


The theorem says that the order common to all \(\succeq\in\mathscr E\) is the closed order generated by adding independent multidimensional background risk. Background risk, therefore, plays two roles: invariance to independent background risk engenders the representation in \Cref{thm:scalar-main} and \Cref{cor:preference-representation}, while adding independent background risk generates the unrestricted robust order.

\section{Two Applications}\label{sec:apps}
\subsection{Welfare Weights Under Risk}\label{sec:welfare}

A natural application is distributional social welfare. A risky policy can be viewed as a vector \(X\in L^\infty(\R^d)\), where coordinate \(i\) is the payoff to group \(i\). Pareto improvements are coordinatewise, so the relevant cone is \(C=\R_+^d\). Taking \(e=(1,\ldots,1)\) uses a uniform transfer as the numeraire. In a deterministic welfare analysis, a vector of weights \(\ell\in\Delta_{d-1}\) ranks certain allocations by \(\ell\cdot x\). The question is: once these deterministic welfare weights are fixed, which local comparisons of risky distributional policies survive all representations with those weights and \(a\le0\)?

The normalized dual base \(Q\) is
\[
\Delta_{d-1}\coloneqq\left\{q\in\R_+^d\colon\sum_{i=1}^dq_i=1\right\},
\]
and so \(q\in Q\) is a vector of welfare weights, and \(q\cdot X\) is the weighted utilitarian payoff of the risky allocation \(X\) in each state. The representation below says that a Pareto-monotone, background-risk-invariant planner may evaluate risky allocations by mixing entropic evaluations of such weighted utilitarian payoffs.

\begin{corollary}\label{cor:social-welfare}
Let \(\Phi\colon L^\infty(\R^d)\to\R\) be law-invariant, Pareto-monotone, and additive under independent sums. If its deterministic restriction is \(\Phi(x)=\ell\cdot x\), with \(\ell\in\Delta_{d-1}\), then there is \(m\in\mathcal P(\Theta)\) such that
\[
\Phi(X)=\int_{\Theta}K_a(q\cdot X)dm(a,q)
\qquad\text{and}\qquad
\int_{\Theta}qdm(a,q)=\ell.
\]
Conversely, every such \(m\) defines a law-invariant, Pareto-monotone, additive certainty equivalent by the displayed formula, with the deterministic restriction \(x\mapsto\ell\cdot x\).
\end{corollary}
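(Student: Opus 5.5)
This corollary is the specialization of \Cref{thm:scalar-main} to \(V=\R^d\), \(C=\R^d_+\), \(e=(1,\ldots,1)\). The plan is to check that this specialization meets the standing hypotheses and that the corollary's terms mean what the theorem's terms mean.

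First, the setting. The cone \(\R^d_+\) is closed, convex and pointed, and has nonempty interior. The vector \(e=(1,\ldots,1)\) lies in that interior. The dual cone is \(C^*=\R^d_+\), so \(Q=\{q\in\R^d_+\colon \sum_i q_i=1\}=\Delta_{d-1}\), as the text notes. In this setting, ``Pareto-monotone'' means monotone with respect to first-order stochastic dominance for the order \(\le_C\) generated by \(\R^d_+\), i.e.\ property (ii) of \Cref{def:scalar-mas}. A law-invariant, Pareto-monotone functional that is additive under independent sums is therefore exactly an ACF.

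Forward direction. Let \(\Phi\) be such a functional; it is an ACF. Its deterministic restriction is \(\ell\cdot x\) with \(\ell\in\Delta_{d-1}\), so \(\ell(e)=\sum_i\ell_i=1\). In particular \(\Phi(ce)=c\), so \(\Phi\) is in fact an ACE. \Cref{thm:scalar-main} then gives a finite positive Borel measure \(m\) on \(\Theta=\overline{\R}\times\Delta_{d-1}\) with
\[
\Phi(X)=\int_\Theta K_a(q\cdot X)\,dm(a,q)\qquad\text{and}\qquad \int_\Theta q\,dm(a,q)=\ell .
\]
Since \(\ell(e)=1\), the theorem also says \(m\) is a probability measure, so \(m\in\mathcal P(\Theta)\).

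Converse. Take any \(m\in\mathcal P(\Theta)\) with \(\int q\,dm=\ell\). The converse half of \Cref{thm:scalar-main} shows that \(\Phi_m\) is an ACF, hence law-invariant, Pareto-monotone and additive under independent sums. Its deterministic restriction is \(x\mapsto(\int q\,dm)\cdot x=\ell\cdot x\). Because \(m(\Theta)=1\), \(\Phi_m\) is an ACE.

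There is no genuine obstacle. The only point needing care is the dictionary: confirming that \(Q=\Delta_{d-1}\), that Pareto-monotonicity coincides with cone-monotonicity for \(\R^d_+\), and that \(\ell\in\Delta_{d-1}\) supplies the normalization \(\ell(e)=1\) that makes \(m\) a probability measure.
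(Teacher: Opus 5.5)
Your proposal is correct and follows the paper's proof: you specialize \Cref{thm:scalar-main} to \(C=\R^d_+\) and \(e=(1,\ldots,1)\), where \(Q=\Delta_{d-1}\), use \(\ell(e)=1\) to get total mass one, and invoke the converse half of the theorem for the converse. Your explicit dictionary checks (Pareto-monotonicity read as cone-monotonicity for \(\R^d_+\), and \(\ell\in\Delta_{d-1}\) supplying the normalization) are the steps the paper's one-line proof leaves implicit.
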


\begin{proof}
This is \Cref{thm:scalar-main} specialized to \(C=\R_+^d\), for which \(Q=\Delta_{d-1}\). Since \(\ell(e)=1\), the representing measure has total mass one. The converse is the converse direction of \Cref{thm:scalar-main}.
\end{proof}

\Cref{prop:preference-mas-bridge} tells us that these functionals are exactly the certainty-equivalent functionals of Pareto-monotone, background-risk-invariant planners. Such a planner evaluates risky distributional policies as a mixture of entropic evaluations of weighted utilitarian aggregates. The deterministic welfare weights are only the average latent weights \(\ell=\int_{\Theta}qdm(a,q)\). Thus, fixing \(\ell\) fixes the planner's ranking of certain allocations, but not the planner's local ranking of risky reforms around those allocations.

For the local comparisons, I focus on the \(a\le0\) side of the representation. Consequently, I make the standing assumption that the representing measures satisfy \(\int_{\Theta}qdm(a,q)=\ell\), \(m\left(\left\{-\infty,+\infty\right\}\times\Delta_{d-1}\right)=0\), \(\int_{\R\times\Delta_{d-1}}|a|dm(a,q)<\infty\), and \(a \leq 0\) \(m\)-a.s.
I exclude the endpoint cases only so that I may take derivatives at deterministic allocations. For such an \(m\),  I write \(\Phi_m(X)\coloneqq\int_{\Theta}K_a(q\cdot X)dm(a,q)\) and define
\(\Gamma\coloneqq\int_{\R\times\Delta_{d-1}}(-a)qq^\top dm(a,q)\).

If \(U,V\in L^\infty(\R^d)\), \(\mu\in\R^d\), and \(\ell\cdot\E U=\ell\cdot\E V\), then
\[
\left.\frac{d}{dt}\left(\Phi_m(\mu+tU)-\Phi_m(\mu+tV)\right)\right|_{t=0}=0
\]
and, writing \(\Sigma_U=\operatorname{Var}(U)\) and \(\Sigma_V=\operatorname{Var}(V)\),
\[
\left.\frac{d^2}{dt^2}\left(\Phi_m(\mu+tU)-\Phi_m(\mu+tV)\right)\right|_{t=0}
=
\operatorname{tr}\left(\Gamma\left(\Sigma_V-\Sigma_U\right)\right).
\]
Indeed, for finite \(a\), the first two derivatives of \(t\mapsto K_a(q\cdot(\mu+tU))\) at zero are \(q\cdot\E U\) and \(a\operatorname{Var}(q\cdot U)=aq^\top\Sigma_Uq\), with the \(a=0\) case following directly from the expectation.

I conduct a local comparison under fixed welfare weights:
\begin{proposition}\label{prop:fixed-ell-local-welfare}
Let \(U,V\in L^\infty(\R^d)\) satisfy \(\ell\cdot\E U=\ell\cdot\E V\). The following are equivalent.

\begin{enumerate}
\item\label{compareprop1} For every \(\mu\in\R^d\) and every \(m\in\mathcal P(\Theta)\) satisfying the standing restrictions above, \(\left.\frac{d^2}{dt^2}\left(\Phi_m(\mu+tU)-\Phi_m(\mu+tV)\right)\right|_{t=0}\ge0\).

\item\label{compareprop2}
\(\operatorname{Var}(q\cdot U)\le\operatorname{Var}(q\cdot V)\)
for every \(q\in\Delta_{d-1}\) such that \(q_i=0\) whenever \(\ell_i=0\).
\end{enumerate}
\end{proposition}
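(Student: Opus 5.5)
The plan is to use the second-derivative formula displayed just before the statement. It reduces the problem to a linear-algebra question about which matrices \(\Gamma\) can arise. Write \(D\coloneqq\Sigma_V-\Sigma_U\). For every admissible \(m\) and every \(\mu\), the second derivative equals
\[
\operatorname{tr}(\Gamma D)=\int_{\R\times\Delta_{d-1}}(-a)\,q^\top D q\,dm(a,q),
\]
and this does not depend on \(\mu\). The integral is well defined because \(\int|a|\,dm<\infty\) and \(q^\top Dq\) is bounded on the simplex. Moreover, \(q^\top D q=\operatorname{Var}(q\cdot V)-\operatorname{Var}(q\cdot U)\). So condition \ref{compareprop1} is equivalent to
\[
\int(-a)\bigl(\operatorname{Var}(q\cdot V)-\operatorname{Var}(q\cdot U)\bigr)\,dm\ge0
\]
for all admissible \(m\). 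Let \(Q_\ell\coloneqq\{q\in\Delta_{d-1}\colon q_i=0\text{ whenever }\ell_i=0\}\).

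\textbf{Step 1: \ref{compareprop2} implies \ref{compareprop1}.} Take an admissible \(m\). Since \(\int q\,dm=\ell\) and every \(q\ge0\), for each \(i\) with \(\ell_i=0\) we get \(\int q_i\,dm=0\), so \(q_i=0\) for \(m\)-a.e.\ \(q\). Hence \(m\) is concentrated on \(\R\times Q_\ell\). On that set the integrand \((-a)(q^\top Dq)\) is nonnegative, because \(-a\ge0\) and \(q^\top Dq\ge0\) by \ref{compareprop2}. The integral is therefore nonnegative.

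\textbf{Step 2: \ref{compareprop1} implies \ref{compareprop2}.} Fix \(q\in Q_\ell\). I build an admissible \(m\) that puts weight on \(q\) with \(a<0\) and keeps the barycenter equal to \(\ell\). Because \(\operatorname{supp} q\subseteq\operatorname{supp}\ell\), there is \(\lambda\in(0,1]\) small enough that \(\ell-\lambda q\ge0\). If \(\lambda=1\), then \(q=\ell\) and we take \(m=\delta_{(-1,\ell)}\). Otherwise let \(r\coloneqq(\ell-\lambda q)/(1-\lambda)\in\Delta_{d-1}\) and set
\[
m=\lambda\delta_{(-1,q)}+(1-\lambda)\delta_{(0,r)}.
\]
This \(m\) has barycenter \(\ell\), no endpoint mass, finite \(\int|a|\,dm\), and \(a\le0\) everywhere, so it satisfies the standing restrictions. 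Its second derivative is \(\lambda\,q^\top Dq\), since the \(a=0\) atom contributes nothing. Condition \ref{compareprop1} then gives \(q^\top Dq\ge0\), which is \(\operatorname{Var}(q\cdot U)\le\operatorname{Var}(q\cdot V)\).

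\textbf{Main obstacle.} The only delicate point is justifying the displayed derivative formula under the integral, that is, differentiating \(\int K_a(q\cdot(\mu+tU))\,dm\) twice at \(t=0\). I would bound the second \(t\)-derivative of \(K_a(q\cdot(\mu+tU))\) near \(t=0\) by a constant times \(|a|\,\|U\|_\infty^2\). This uses the fact that \(K_a\) is a cumulant-generating function divided by \(a\), whose second derivative is \(a\) times a tilted variance, and that variance is at most \(\|q\cdot U\|_\infty^2\). The integrability assumption \(\int|a|\,dm<\infty\) then allows dominated convergence. Since the paper already states this formula, I would cite it rather than redo it; for the two-atom measures used in Step 2 it holds trivially anyway.
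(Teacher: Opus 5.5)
Your proposal is correct and is essentially the paper's proof. For condition 2 $\Rightarrow$ condition 1, both use the barycenter condition to force $q_i=0$ $m$-a.e.\ whenever $\ell_i=0$; for the converse, both use the same two-atom measure $\lambda\delta_{(-1,q)}+(1-\lambda)\delta_{(0,r)}$ with $r=(\ell-\lambda q)/(1-\lambda)$, which the paper phrases as a contrapositive. Two small differences: your separate $\lambda=1$ case is unnecessary, since $\lambda$ can always be shrunk below $1$; and your dominated-convergence bound of the second derivative by $|a|\,\|q\cdot U\|_\infty^2$ fills in a differentiation-under-the-integral step that the paper leaves implicit.
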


\begin{proof}
First note that the barycenter condition forces zero-weight groups to receive zero latent weight. If \(\int_{\Theta}qdm(a,q)=\ell\) and \(\ell_i=0\), then
\[
0=\ell_i=\int_{\Theta}q_idm(a,q),
\]
and \(q_i\ge0\) on \(\Delta_{d-1}\). Hence, \(q_i=0\) for \(m\)-a.e. \((a,q)\).

From the derivative taken above,
\[
\left.\frac{d^2}{dt^2}\left(\Phi_m(\mu+tU)-\Phi_m(\mu+tV)\right)\right|_{t=0}
=
\int_{\R\times\Delta_{d-1}}(-a)q^\top\left(\Sigma_V-\Sigma_U\right)qdm(a,q).
\]
Since \(-a\ge0\), this expression is nonnegative for every such \(m\) whenever
\[
q^\top\left(\Sigma_V-\Sigma_U\right)q\ge0
\]
for every \(q\in\Delta_{d-1}\) such that \(q_i=0\) whenever \(\ell_i=0\). This is exactly \ref{compareprop2}, since \(q^\top\Sigma_Uq=\operatorname{Var}(q\cdot U)\) and \(q^\top\Sigma_Vq=\operatorname{Var}(q\cdot V)\).

Conversely, suppose there is \(\hat q\in\Delta_{d-1}\), with \(\hat q_i=0\) whenever \(\ell_i=0\), such that
\[
\hat q^\top\left(\Sigma_V-\Sigma_U\right)\hat q<0.
\]
For \(\varepsilon>0\) small enough that \(\ell_i-\varepsilon\hat q_i\ge0\) for every \(i\), set \(r\coloneqq\frac{\ell-\varepsilon\hat q}{1-\varepsilon}\), so that \(r\in\Delta_{d-1}\), and
\(m_\varepsilon\coloneqq\varepsilon\delta_{(-1,\hat q)}+(1-\varepsilon)\delta_{(0,r)}\) satisfies the restrictions above. But for this measure,
\[
\int_{\R\times\Delta_{d-1}}(-a)q^\top\left(\Sigma_V-\Sigma_U\right)qdm_\varepsilon(a,q)
=
\varepsilon\hat q^\top\left(\Sigma_V-\Sigma_U\right)\hat q<0.
\]
i.e., \ref{compareprop1} fails.
\end{proof}
In short, once deterministic welfare weights are fixed, a local comparison is unanimous across all such mixtures exactly when every compatible welfare-weighted aggregate is less variable under \(U\) than under \(V\).\footnote{Observe also that if \(\ell_i=0\), then group \(i\) is excluded from these aggregates. \textit{Viz.}, a group that receives zero deterministic welfare weight cannot receive positive latent welfare weight only under risk.}

\subsection{Dated Streams}
My representation also applies to dated streams. Interpret the coordinates of \(V\) as deliveries of goods, currencies, or accounts at different dates. The numeraire \(e\) is a benchmark stream, and \(\Phi(X)\) is measured in units of that stream. An element \(q\in Q\) is a normalized positive valuation of the dated coordinates--a present-value functional--so that \(q\cdot X\) is the scalar payoff obtained from the stream using \(q\). The scalar certainty equivalent \(K_a\) evaluates risk in that payoff, and a representing measure \(m\) averages across latent dated valuations and risk sensitivities. As before, deterministic choices identify only the average valuation \(\ell=\int_{\Theta}qdm(a,q)\).

This dated-stream interpretation provides a risky analog of multiple-discount-rate unanimity.\footnote{Cf. \cite{ChambersEchenique2018,ChambersEchenique2020}, who study deterministic one-good streams and disagreement over exponential discount factors. This exercise is also different from the time-lottery one in \cite{MPST2024}. There, the uncertainty is the delivery date of a fixed reward.} In the one-good case with a constant benchmark stream, an exponential discount vector is a special positive valuation of dated coordinates, with coordinates proportional to \(1,\delta,\delta^2,\ldots\). For deterministic streams, requiring agreement for every such valuation asks that \(q\cdot x\geq q\cdot y\) for every exponential \(q\).

For risky streams, the comparison must also survive risk. If we restrict the dated valuations to exponential discount vectors and allow deterministic valuations to vary, the risky analog is the requirement that \(K_a(q\cdot X)\geq K_a(q\cdot Y)\) for every scalar risk sensitivity \(a\) and every exponential discount valuation \(q\). When \(X\) and \(Y\) are deterministic, every \(K_a\) fixes constants, so the test collapses to the previous present-value one. Accordingly, deterministic multiple-discount-rate unanimity is the deterministic-stream case of the risky profile order.

If we allow all \(q\in Q\), rather than only exponential discount valuations, and allow deterministic valuations to vary, the same condition is the full-profile test from \secref{sec:robust}. If the deterministic valuation is fixed at \(\ell\), only mixtures whose average dated valuation is \(\ell\) are permitted.

\appendix

\section{Omitted Proofs}\label[appendix]{app:proofs}

\subsection{Proofs of \texorpdfstring{\Cref{lem:dual-base-order-unit,lem:Q-compact}}{lemmas}}
\begin{proof}[Proof of \Cref{lem:dual-base-order-unit}]
Let \(S=C\cap\{v \in V\colon\|v\|=1\}\). Since \(C\) is closed and has nonempty interior, \(S\) is compact and nonempty. Pointedness implies \(0\notin \conv S\): otherwise, a nontrivial convex combination of points of \(C\) would sum to \(0\), which is impossible in a pointed cone.

By the strict separating hyperplane theorem, applied to the compact convex set \(\conv S\) and the point \(0\), there are \(\beta\in V^*\) and \(\alpha>0\) such that \(\beta(s)\ge\alpha\) for every \(s\in S\). Consequently, \(\beta(c)\ge\alpha\|c\|\) for every \(c\in C\). If \(r\in V^*\) and \(\|r\|_*<\alpha\), then, for every \(c\in C\),
\[
  \left(\beta+r\right)(c)\ge \alpha\|c\|-\|r\|_*\|c\|\ge0.
\]
Thus, \(\beta\in\operatorname{int}C^*\), and so \(C^*\setminus\left\{0\right\}\ne\varnothing\).

If \(q\in C^*\setminus\left\{0\right\}\), then \(q(e)>0\). Indeed, if \(q(e)=0\), then, for all sufficiently small \(h\in V\), both \(e+h\) and \(e-h\) belong to \(C\), so \(0\le q(e+h)=q(h)\) and \(0\le q(e-h)=-q(h)\). Thus, \(q(h)=0\) for all sufficiently small \(h\), and, therefore, \(q=0\), a contradiction. We conclude that \(\beta/\beta(e)\in Q\), so \(Q\) is nonempty. Moreover, as \(q(e) > 0\) (whenever \(q \in C^* \setminus \left\{0\right\}\)), we can normalize every nonzero \(p\in C^*\) by \(p(e)\): \(p/p(e)\in Q\). Accordingly, \(C^*=\left\{\lambda q\colon \lambda\ge0,\ q\in Q\right\}\).

Finally, since \(e\in\operatorname{int}C\), there is \(\delta>0\) such that \(e+h\in C\) whenever \(\|h\|<\delta\). If \(v\in V\), choose \(M>0\) with \(\|v/M\|<\delta\). Then \(e-v/M\in C\) and \(e+v/M\in C\), so \(v\le_C Me\) and \(-Me\le_C v\). The same argument shows that \(\left[-e,e\right]\) contains a neighborhood of \(0\). If \(B\subseteq V\) is bounded, choose \(M>0\) such that \(\|v/M\|<\delta\) for every \(v\in B\). Then \(B\subseteq\left[-Me,Me\right]\).
\end{proof}

\begin{proof}[Proof of \Cref{lem:Q-compact}]
\(Q\) is closed and convex, and it is nonempty by \Cref{lem:dual-base-order-unit}. It remains to prove boundedness. If \(Q\) were unbounded, there would be \(q_n\in Q\) with \(\|q_n\|_*\to\infty\). Passing to a subsequence, \(q_n/\|q_n\|_*\to q\) for some \(q\in V^*\) with \(\|q\|_*=1\). Since \(C^*\) is closed, \(q\in C^*\). But
\[
  q(e)=\lim_n\frac{q_n(e)}{\|q_n\|_*}=0,
\]
contradicting \Cref{lem:dual-base-order-unit}. Thus, \(Q\) is bounded, \textit{ergo} compact.
\end{proof}

\subsection{Proof of \texorpdfstring{\Cref{prop:preference-mas-bridge}}{proposition}}
\begin{proof}[Proof of \Cref{prop:preference-mas-bridge}]
Let \(\succeq\in\mathscr E\). We first show that the numeraire line is ordered by the usual order on \(\R\): \(ce\succeq c'e\) \(\iff\) \(c \geq c'\).

If \(c\ge c'\), then \(ce\succeq_{\mathrm{st}}c'e\), which implies \(ce\succeq c'e\) due to cone monotonicity. Conversely, suppose \(c<c'\) and \(ce\succeq c'e\). Since \(c'e\succeq_{\mathrm{st}}ce\), \(c'e\succeq ce\), again by cone monotonicity. Accordingly, \(ce\sim c'e\). But then the deterministic risk \(ce\) has two certainty equivalents, \(c\) and \(c'\), contradicting uniqueness.

Since \(X\sim \Phi_\succeq(X)e\) and \(Y \sim \Phi_\succeq(Y)e\), 
transitivity delivers
\[
  X\succeq Y
  \quad\iff\quad
  \Phi_\succeq(X)e\succeq \Phi_\succeq(Y)e
  \quad\iff\quad
  \Phi_\succeq(X)\ge\Phi_\succeq(Y),
\]
and so \(\Phi_\succeq\) represents \(\succeq\). In particular, law-invariance and cone monotonicity of \(\succeq\) imply law-invariance and cone monotonicity of \(\Phi_\succeq\).

If \(X,Y\) are independent, then \(X\sim \Phi_\succeq(X)e\). Adding the independent background risk \(Y\), background-risk invariance yields
\[
  X+Y\sim \Phi_\succeq(X)e+Y \sim
  \left(\Phi_\succeq(X)+\Phi_\succeq(Y)\right)e.
\]
By uniqueness of the certainty equivalent,
\[
  \Phi_\succeq(X+Y)=\Phi_\succeq(X)+\Phi_\succeq(Y).
\]
Finally, \(ce\sim ce\), so by uniqueness, \(\Phi_\succeq(ce)=c\) for all \(c\in\R\). We conclude that \(\Phi_\succeq\) is an ACE.

Conversely, let \(\Phi\) be an ACE and define \(\succeq_\Phi\) by
\[
  X\succeq_\Phi Y
  \quad\iff\quad
  \Phi(X)\ge\Phi(Y).
\]
Completeness, reflexivity, and transitivity follow from the usual order on \(\R\). Law-invariance and cone monotonicity follow from the corresponding properties of \(\Phi\). Since \(\Phi\) is normalized, \(\Phi(\Phi(X)e)=\Phi(X)\), so \(X\sim_\Phi \Phi(X)e\). If also \(X\sim_\Phi ce\), then \(\Phi(X)=\Phi(ce)=c\), so the certainty equivalent is unique.

If \(Z\) is independent of the pair \((X,Y)\), by independent additivity, \(\Phi(X+Z)=\Phi(X)+\Phi(Z)\)
and \(\Phi(Y+Z)=\Phi(Y)+\Phi(Z)\). Accordingly,
\[\Phi(X)\ge\Phi(Y)
  \quad\iff\quad
  \Phi(X+Z)\ge\Phi(Y+Z),
\]
which is invariance to independent background risk. We conclude the correspondence.\end{proof}

\subsection{Proofs of \texorpdfstring{\Cref{lem:profile-lipschitz,lem:profile-additivity}}{lemmas}}
\begin{proof}[Proof of \Cref{lem:profile-lipschitz}]
I prove this lemma through a sequence of claims that establish the result for finitely-valued risks, which I then use in an approximation argument. 
\begin{claim}
    \(\kappa_X\) is bounded.
\end{claim}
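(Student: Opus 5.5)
The claim is that \(\kappa_X\) is bounded on \(\Theta\). I will bound \(|K_a(q\cdot X)|\) uniformly in \((a,q)\) by the order-unit norm \(\|X\|_{\infty,e}\), which is finite for \(X\in L^\infty(V)\).

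\emph{Step 1: a pointwise bound on the projections.} Set \(r\coloneqq\|X\|_{\infty,e}\). This is finite: \(X\) is essentially bounded, so almost surely its values lie in a bounded set, and by \Cref{lem:dual-base-order-unit} that set is contained in some order interval \([-Me,Me]\). Hence \(\|X\|_e\le M\) almost surely. For any \(r'>r\) we therefore have, almost surely,
\[
-r'e\le_C X\le_C r'e.
\]
Now fix \(q\in Q\). Since \(q\) is \(C\)-positive and \(q(e)=1\), applying \(q\) to the two differences \(X+r'e\in C\) and \(r'e-X\in C\) gives \(-r'\le q\cdot X\le r'\) almost surely. Letting \(r'\downarrow r\) yields \(|q\cdot X|\le r\) almost surely, with the same \(r\) for every \(q\in Q\).

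\emph{Step 2: bounds on the entropic evaluators.} Each \(K_a\), for \(a\in\overline{\R}\), is monotone and fixes constants. For finite \(a\neq0\), the map \(z\mapsto\frac1a\log\E\mathrm e^{az}\) is monotone because \(t\mapsto \mathrm e^{at}\) and \(\frac1a\log\) are both increasing when \(a>0\) and both decreasing when \(a<0\). The cases \(a=0\) and \(a=\pm\infty\) (expectation, essential supremum, essential infimum) are immediate. Consequently, \(-r\le Z\le r\) almost surely implies \(-r\le K_a(Z)\le r\).

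Applying this to \(Z=q\cdot X\) gives \(\|\kappa_X\|_\infty\le\|X\|_{\infty,e}<\infty\).

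There is no real obstacle here. The only point requiring care is that the bound is uniform over the noncompact-looking range of \(a\), including the endpoints \(\pm\infty\). This is handled by the monotonicity and constant-preservation argument in Step 2, which avoids any explicit estimates in \(a\).
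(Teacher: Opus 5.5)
Your proof is correct and is essentially the paper's argument: you order-bound \(X\) between \(-Me\) and \(Me\) via \Cref{lem:dual-base-order-unit}, pass this bound through each \(q\in Q\) using positivity and \(q(e)=1\), and conclude from the monotonicity and constant-preservation of every \(K_a\), endpoints included. The only difference is that you work with the sharp constant \(\|X\|_{\infty,e}\) (via the \(r'\downarrow r\) limit) instead of an arbitrary order bound \(M\), which is a minor refinement that gives the bound \(\|\kappa_X\|_\infty\le\|X\|_{\infty,e}\).
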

\begin{proof} Fix \(X\in L^\infty(V)\). Since bounded sets are order-bounded by \Cref{lem:dual-base-order-unit}, choose \(M>0\) such that \(-Me\le_C X\le_C Me\) a.s. Then, for every \(q\in Q\), \(-M\le q\cdot X\le M\) a.s., because \(q\in C^*\) and \(q(e)=1\). By definition, each \(K_a\) is monotone and satisfies \(K_a(c)=c\) for a constant \(c\). Hence, \(-M\le K_a(q\cdot X)\le M\) for every \((a,q)\in\Theta\), i.e., \(\kappa_X\) is bounded.
\end{proof}

\begin{claim}\label[claim]{claim:lip}
    \(\left\|\kappa_X-\kappa_Y\right\|_\infty\le\left\|X-Y\right\|_{\infty,e}\).
\end{claim}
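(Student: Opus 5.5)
Set \(r\coloneqq\left\|X-Y\right\|_{\infty,e}\). The plan is to move the perturbation from the vector risk to each scalar projection, and then use that every \(K_a\) is monotone and translation-equivariant.

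First, I reduce to a pointwise bound on the projections. For almost every state, \(\left\|X-Y\right\|_e\le r\). Because \(C\) is closed, the infimum in the definition of \(\left\|\cdot\right\|_e\) is attained (the same limiting argument as in the footnote establishing that \(\left\|\cdot\right\|_e\) is a norm). Hence \(-re\le_C X-Y\le_C re\) almost surely. Fix \(q\in Q\). Since \(q\in C^*\) and \(q(e)=1\), applying \(q\) gives
\[
q\cdot Y-r\le q\cdot X\le q\cdot Y+r \quad\text{a.s.}
\]
So the two scalar risks are uniformly within \(r\) of each other, for every \(q\in Q\) simultaneously.

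Second, I show that each \(K_a\), \(a\in\overline{\R}\), is monotone under almost-sure dominance and satisfies \(K_a(Z+c)=K_a(Z)+c\) for constants \(c\). For \(a\ne0\) finite, translation follows from \(\frac1a\log\E \mathrm e^{a(Z+c)}=c+\frac1a\log\E \mathrm e^{aZ}\). Monotonicity follows because \(z\mapsto \mathrm e^{az}\) and \(u\mapsto\frac1a\log u\) are both increasing when \(a>0\) and both decreasing when \(a<0\). The cases \(a=0\) and \(a=\pm\infty\) (expectation, essential supremum, essential infimum) are immediate. Combining these facts with the sandwich gives
\[
K_a(q\cdot Y)-r\le K_a(q\cdot X)\le K_a(q\cdot Y)+r.
\]
Taking the supremum over \((a,q)\in\Theta\) yields \(\left\|\kappa_X-\kappa_Y\right\|_\infty\le r\).

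I do not see a genuine obstacle. The only points needing care are:
\begin{itemize}
\item confirming that the infimum in \(\left\|\cdot\right\|_e\) is attained. If one prefers to avoid this, run the argument with \(r+\eta\) for arbitrary \(\eta>0\) and let \(\eta\downarrow0\);
\item the sign bookkeeping in the monotonicity of \(K_a\) for \(a<0\);
\item checking that the infinite endpoints behave correctly, which is trivial for essential suprema and infima.
\end{itemize}
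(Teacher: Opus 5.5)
Your proposal is correct and follows essentially the same route as the paper. You bound the projections via \(-re\le_C X-Y\le_C re\), then use that each \(K_a\) is monotone and translation-equivariant. The paper packages those two properties into the single estimate ``\(Z\le W+r\) a.s.\ implies \(K_a(Z)\le K_a(W)+r\),'' and it works with \(r>\left\|X-Y\right\|_{\infty,e}\) before letting \(r\) decrease, which is exactly your fallback to attainment of the infimum.
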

\begin{proof}
    First, note that if \(Z\) and \(W\) are bounded real random variables and \(Z\le W+r\) a.s. for some \(r\ge0\), then \(K_a(Z)\le K_a(W)+r\) for every \(a\in\overline{\R}\). For \(a>0\), this follows from \(\E\mathrm e^{aZ}\le\mathrm e^{ar}\E\mathrm e^{aW}\). For \(a<0\), the inequality \(Z\le W+r\) yields \(\E\mathrm e^{aZ}\ge\mathrm e^{ar}\E\mathrm e^{aW}\), and division by \(a<0\) reverses the inequality after taking logs. The cases \(a=0,+\infty,-\infty\) follow from the definitions.\footnote{For \(a=0\), this is \(\E Z\le\E W+r\); for \(a=+\infty\), this is \( \esssup Z\le \esssup \left(W+r\right)= \esssup W+r\); and for \(a=-\infty\), this is \( \essinf Z\le \essinf \left(W+r\right)= \essinf W+r\).} Applying the same argument with \(Z\) and \(W\) reversed delivers
\[\tag{\(1\)}\label{in1}
\left|K_a(Z)-K_a(W)\right|\le\left\|Z-W\right\|_\infty.
\]

Now fix \(X,Y\in L^\infty(V)\), and take \(r>\left\|X-Y\right\|_{\infty,e}\). Then \(-re\le_C X-Y\le_C re\) a.s. Hence, for every \(q\in Q\), \(\left|q\cdot X-q\cdot Y\right|\le r\) a.s. Applying \eqref{in1} with \(Z=q\cdot X\) and \(W=q\cdot Y\), we get \(\left|K_a(q\cdot X)-K_a(q\cdot Y)\right|\le r\) for every \((a,q)\in\Theta\). Taking the supremum over \(\Theta\) and then letting \(r\downarrow\left\|X-Y\right\|_{\infty,e}\) yields
\(\left\|\kappa_X-\kappa_Y\right\|_\infty\le\left\|X-Y\right\|_{\infty,e}\).\end{proof}

\begin{claim}\label[claim]{claim:fincont}
    If \(X\) is finite-valued, then \(\kappa_X \in C(\Theta)\).
\end{claim}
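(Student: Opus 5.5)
Write \(X=\sum_{j=1}^n x_j\one_{A_j}\) with \(p_j=\mathbb P(A_j)>0\), so that \(\kappa_X(a,q)=K_a\bigl(\sum_j (q\cdot x_j)\one_{A_j}\bigr)\). For finite \(a\ne0\) this is
\[
\kappa_X(a,q)=\frac1a\log\sum_{j=1}^n p_j\mathrm e^{a\,q\cdot x_j}.
\]
At \(a=0\) it is \(\sum_j p_j\,q\cdot x_j\), and at \(a=\pm\infty\) it is \(\max_j q\cdot x_j\) and \(\min_j q\cdot x_j\), respectively. The plan is to show that \(\kappa_X\) is jointly continuous at every point \((a_0,q_0)\in\Theta\). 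I will split into three cases: \(a_0\in\R\setminus\{0\}\), \(a_0=0\), and \(a_0=\pm\infty\). Throughout I use that each map \(q\mapsto q\cdot x_j\) is linear, hence continuous, on the compact set \(Q\) (\Cref{lem:Q-compact}). The values \(s_j(q)\coloneqq q\cdot x_j\) are therefore uniformly bounded by some \(M\) over \(Q\).

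For \(a_0\ne0\) finite, the displayed formula is a composition of continuous functions of \((a,q)\) on a neighborhood where \(a\ne0\), so continuity is immediate. For \(a_0=0\), I will use the real-analytic function \(g(a,s)=\log\sum_j p_j\mathrm e^{as_j}\). It is smooth in \((a,s)\in\R\times[-M,M]^n\) and satisfies \(g(0,s)=0\) and \(\partial_a g(0,s)=\sum_j p_js_j\). Taylor's theorem with uniform remainder gives \(\bigl|g(a,s)/a-\sum_jp_js_j\bigr|\le C|a|\) uniformly on \([-M,M]^n\). Alternatively, I can write \(g(a,s)/a=\int_0^1\partial_ag(ta,s)\,dt\), which is continuous in \((a,s)\). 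Composing with \(s=(s_j(q))_j\) then yields joint continuity at \((0,q_0)\).

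The main obstacle is the endpoint \(a_0=+\infty\) (the case \(-\infty\) is symmetric, via \(K_{-a}(Z)=-K_a(-Z)\)). The key step is a uniform sandwich. For \(a>0\),
\[
\max_j s_j+\frac{\log p_{\min}}{a}\;\le\;\frac1a\log\sum_jp_j\mathrm e^{as_j}\;\le\;\max_j s_j ,
\]
where \(p_{\min}=\min_jp_j>0\). Hence \(|\kappa_X(a,q)-\max_j s_j(q)|\le|\log p_{\min}|/a\) uniformly in \(q\). Since \(q\mapsto\max_j s_j(q)=\kappa_X(+\infty,q)\) is continuous, I combine this uniform convergence with continuity at the limit. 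Given \((a_k,q_k)\to(+\infty,q_0)\), I bound \(|\kappa_X(a_k,q_k)-\kappa_X(+\infty,q_0)|\) by \(|\log p_{\min}|/a_k+|\max_js_j(q_k)-\max_js_j(q_0)|\) whenever \(a_k\) is finite, and by the second term alone when \(a_k=+\infty\). Both terms tend to zero. Because \(\Theta\) is metrizable, sequential continuity suffices. Note that the sandwich relies on finitely many atoms with positive probability, which is exactly why the claim is restricted to finite-valued \(X\). The general case will follow afterward from \Cref{claim:lip} by uniform approximation.
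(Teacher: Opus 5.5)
Your proof is correct and follows essentially the same route as the paper: the paper uses exactly your integral identity \(g(a,s)/a=\int_0^1\partial_a g(ta,s)\,dt\) to get continuity on all of \(\R\times Q\) at once, including \(a=0\), and then the same uniform sandwich with \(\log(\min_i p_i)/a\) at the endpoints. Your sequential argument at \((\pm\infty,q_0)\) spells out the joint-continuity step that the paper leaves implicit after noting uniform convergence in \(q\) together with continuity of \(q\mapsto\max_i q\cdot x_i\).
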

\begin{proof}
    Suppose \(X\) takes the finitely many values \(x_1,\ldots,x_N\in V\), with probabilities \(p_1,\ldots,p_N>0\). For finite \(a\ne0\),
\[
\kappa_X(a,q)=\frac{1}{a}\log\left(\sum_{i=1}^Np_i\mathrm e^{a q\cdot x_i}\right).
\]
For such \(a\), we use the fundamental theorem of calculus to obtain the alternative formula
\[
\kappa_X(a,q)=\int_0^1
\frac{\sum_{i=1}^Np_i\left(q\cdot x_i\right)\mathrm e^{ta q\cdot x_i}}
{\sum_{i=1}^Np_i\mathrm e^{ta q\cdot x_i}}dt.
\]
The integrand is continuous in \((t,a,q)\), and the denominator is always strictly positive. Thus, the right-hand side is continuous in \((a,q)\in\R\times Q\). At \(a=0\), it equals \(\sum_i p_iq\cdot x_i=\E\left[q\cdot X\right]=K_0(q\cdot X)\). Accordingly, \(\kappa_X\) is continuous on \(\R\times Q\).

It remains to check the endpoints. Since \(X\) is finite-valued, \(K_{+\infty}(q\cdot X)=\max_i q\cdot x_i\) and \(K_{-\infty}(q\cdot X)=\min_i q\cdot x_i\). For \(a>0\),
\[
\frac{\log\left(\min_i p_i\right)}{a}
\le
\kappa_X(a,q)-\max_i q\cdot x_i
\le0.
\]
Indeed, after factoring out \(\max_i q\cdot x_i\), the remaining weighted exponential sum lies between \(\min_i p_i\) and \(1\). The bound is uniform in \(q\), and \(q\mapsto\max_i q\cdot x_i\) is continuous. Hence, \(\kappa_X(a,q)\to K_{+\infty}(q\cdot X)\) uniformly in \(q\) as \(a\to+\infty\).

Similarly, for \(a<0\),
\[
0
\le
\kappa_X(a,q)-\min_i q\cdot x_i
\le
\frac{-\log\left(\min_i p_i\right)}{\left|a\right|}.
\]
The bound is uniform in \(q\), and \(q\mapsto\min_i q\cdot x_i\) is continuous. Hence, \(\kappa_X(a,q)\to K_{-\infty}(q\cdot X)\) uniformly in \(q\) as \(a\to-\infty\). Therefore, \(\kappa_X\in C(\Theta)\) whenever \(X\) is finite-valued.
\end{proof}
We finish the proof of the lemma by approximating arbitrary risks by finite-valued risks. Take arbitrary \(X\in L^\infty(V)\). After changing \(X\) on a null set, assume its range is contained in \(\left\{v\in V\colon \left\|v\right\|_e\le M\right\}\) for some \(M<\infty\). This set is compact. For each \(n\), cover it by finitely many open \(\left\|\cdot\right\|_e\)-balls of radius \(1/(2n)\), refine the cover into a finite Borel partition, and choose one point from each nonempty cell. Let \(X_n\) be the chosen point in the cell containing \(X\). Then each \(X_n\) is finite-valued and
\(\left\|X_n-X\right\|_{\infty,e}\le\frac{1}{n}\).

By \Cref{claim:fincont}, \(\kappa_{X_n}\in C(\Theta)\); and by \Cref{claim:lip},
\[
\left\|\kappa_{X_n}-\kappa_X\right\|_\infty\le\left\|X_n-X\right\|_{\infty,e}\le\frac{1}{n}.
\]
Thus, \(\kappa_X\) is the uniform limit of continuous functions on \(\Theta\) and so \(\kappa_X\in C(\Theta)\).
\end{proof}

\begin{proof}[Proof of \Cref{lem:profile-additivity}]
For finite \(a\ne0\), we have \(\E\mathrm e^{a q\cdot\left(X+Y\right)}
=
\E\mathrm e^{a q\cdot X}\E\mathrm e^{a q\cdot Y}\) due to independence.\footnote{Independence is preserved under measurable transformations, and \(q\) is fixed.} Taking logs and dividing by \(a\) yields \(K_a(q\cdot\left(X+Y\right))=K_a(q\cdot X)+K_a(q\cdot Y)\). For \(a=0\), the linearity of \(q\) and the expectation gives us
\[
K_0(q\cdot\left(X+Y\right))
=
\E\left[q\cdot\left(X+Y\right)\right]
=
\E\left[q\cdot X\right]+\E\left[q\cdot Y\right]
=
K_0(q\cdot X)+K_0(q\cdot Y).
\]

For the endpoint \(a=+\infty\), set \(Z\coloneqq q\cdot X\) and \(W\coloneqq q\cdot Y\). The inequality \( \esssup \left(Z+W\right)\le \esssup Z+ \esssup W\) is immediate. For the reverse inequality, fix \(\varepsilon>0\). The events \(\left\{Z> \esssup Z-\varepsilon\right\}\) and \(\left\{W> \esssup W-\varepsilon\right\}\) have positive probability (by the definition of the essential supremum). By independence, their intersection has positive probability. Hence, \( \esssup \left(Z+W\right)\ge \esssup Z+ \esssup W-2\varepsilon\). Letting \(\varepsilon\downarrow0\), we have \[\tag{\(2\)}\label{esssupid} \esssup \left(Z+W\right)= \esssup Z+ \esssup W.\] The essential-infimum identity, \[ \essinf \left(Z+W\right)= \essinf Z+ \essinf W,\] follows by applying the essential-supremum identity \eqref{esssupid} to \(-Z\) and \(-W\). We conclude that the endpoint cases also satisfy \(\kappa_{X+Y}=\kappa_X+\kappa_Y\).

The deterministic identity is immediate because \(q\cdot x\) is constant, and \(K_a\) fixes constants.
\end{proof}

\subsection{Proof of \texorpdfstring{\Cref{lem:detlin}}{lemma}}
\begin{proof}[Proof of \Cref{lem:detlin}]
For deterministic \(x,y\),
\[
  \ell(x+y)=\Phi(x+y)=\Phi(x)+\Phi(y)=\ell(x)+\ell(y).
\]
Also, \(\ell(0)=\ell(0)+\ell(0)\), so \(\ell(0)=0\). If \(c\in C\), then \(c\succeq_{\mathrm{st}}0\), so \(\ell(c)\ge \ell(0)=0\). Since \(e\in\operatorname{int}C\), for all sufficiently small \(h\in V\), both \(e+h\) and \(e-h\) belong to \(C\). Therefore,
\(-\ell(e)\le \ell(h)\le \ell(e)\) on a neighborhood of \(0\). An additive function bounded on a neighborhood is continuous, hence, linear. Positivity on \(C\) yields \(\ell\in C^*\).
\end{proof}

\subsection{Proof of \texorpdfstring{\Cref{thm:scalar-main}}{theorem}}\label[appendix]{proofofrepthm}

I begin with a sequence of auxiliary results. First, I state \citet[Theorem 5.6]{Fritz2024}, before verifying that its conditions are satisfied. In particular, I use exactly the strict-converse part of \cite[Theorem 5.6]{Fritz2024}, specialized to \(G=V=\R^d\), \(G_+=C\), and \(u=e\). In \citeauthor{Fritz2024}'s notation, \(X\le^d Y\) denotes the stochastic preorder between the laws, i.e., \(X\preceq_{st} Y\) in my notation. The strict inequalities in his condition (iii) imply his condition (i), namely the existence of a compactly supported catalyst \(Z\) with \(X+Z\le^d Y+Z\). I record this specialization explicitly.
\begin{theorem}[Theorem 5.6 in \citet{Fritz2024}]\label{thm:fritz-finite}
Let \(A,B\) be compactly supported \(V\)-valued Radon random variables. Suppose that for every nonzero \(t\in C^*\),
\(\E \mathrm e^{t\cdot B}<\E \mathrm e^{t\cdot A}\), \(\E \mathrm e^{-t\cdot B}>\E \mathrm e^{-t\cdot A}\), \(\E[t\cdot B]<\E[t\cdot A]\), \(\max_{x\in\supp B}t\cdot x<\max_{x\in\supp A}t\cdot x\), and \(\min_{x\in\supp B}t\cdot x<\min_{x\in\supp A}t\cdot x\). Then there exists a compactly supported Radon random vector \(Z\), independent of the pair \((A,B)\), such that
\(B+Z\preceq_{\mathrm{st}}A+Z\).
\end{theorem}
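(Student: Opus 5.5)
The statement is a transcription of \citet[Theorem 5.6]{Fritz2024} into the present notation, so the plan is to check that our setting is an instance of his and that his strict-inequality condition is exactly the list displayed. We do not reprove Fritz's result.

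\emph{Ambient structure.} First I would place \(G=V=\R^d\) in Fritz's framework as a topological abelian group, with positive cone \(G_+=C\) and order unit \(u=e\). The following facts are needed.
\begin{enumerate}[noitemsep,label=(\alph*)]
  \item \(C\) is a closed, convex, pointed cone, so \(\le_C\) is a closed partial order compatible with addition.
  \item \(e\) is an order unit. By \Cref{lem:dual-base-order-unit}, every \(v\) satisfies \(-Me\le_C v\le_C Me\), and the order interval \([-e,e]\) contains a neighborhood of \(0\).
  \item Compactly supported Borel laws on \(\R^d\) are automatically Radon, since \(\R^d\) is Polish. So the random variables \(A,B\) fall under his hypotheses.
  \item Fritz's stochastic preorder \(\le^d\) agrees with \(\preceq_{\mathrm{st}}\). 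This holds because our definition uses closed \(C\)-upper sets, matching \citet[Proposition 4.1]{Fritz2024}; this is where the remark after the definition of \(\le_{\mathrm{st}}\) is used.
\end{enumerate}

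\emph{Test functionals.} Next I would identify Fritz's test objects with \(C^*\setminus\{0\}\). His condition (iii) ranges over the nonzero monotone continuous homomorphisms \(\phi\colon G\to\R\). On \(\R^d\), a continuous additive map is linear, and monotonicity with respect to \(\le_C\) is exactly \(\phi(C)\subseteq[0,\infty)\). So these homomorphisms are precisely the nonzero \(t\in C^*\).

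I would then translate each quantity in his condition (iii) for a given \(\phi=t\):
\begin{enumerate}[noitemsep,label=(\alph*)]
  \item the two Laplace-transform comparisons, at exponents \(t\) and \(-t\);
  \item the expectation comparison;
  \item the comparisons of the maximum and minimum of \(t\) over the supports.
\end{enumerate}
These give exactly the five displayed inequalities, with \(A\) in the role of the dominating variable. The inequality at \(-t\) appears reversed because \(s\mapsto \mathrm e^{-s}\) is decreasing.

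Since every nonzero \(t\in C^*\) is a positive multiple of some \(q\in Q\) (\Cref{lem:dual-base-order-unit}), the hypotheses could equivalently be stated over \(Q\) and all positive scalings. This is convenient later, but it is not needed to invoke the theorem.

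\emph{Conclusion.} With the hypotheses matched, Fritz's implication (iii)\(\Rightarrow\)(i) yields a compactly supported Radon law \(\nu\) such that \(\Law(B)*\nu\) is dominated by \(\Law(A)*\nu\) in his order. I would then realize \(\nu\) by a random vector \(Z\) independent of \((A,B)\), using the standing richness assumption on the probability space. Because convolution is the law of an independent sum, this gives \(B+Z\preceq_{\mathrm{st}}A+Z\).

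\emph{Main obstacle.} The only delicate step is bookkeeping of conventions, and I would check three points against his paper.
\begin{enumerate}[noitemsep,label=(\alph*)]
  \item Whether Fritz phrases the preorder via upper or lower sets, and in which direction \(X\le^d Y\) runs. Getting this wrong would swap the roles of \(A\) and \(B\).
  \item Whether his exponential conditions are indexed by \(\phi\) together with a separate real parameter. If so, I would absorb that parameter into \(t\) by positive homogeneity of \(C^*\).
  \item Whether his extremal (tropical) conditions compare the maximum and minimum of \(\phi\) over the supports in the same direction as stated here.
\end{enumerate}
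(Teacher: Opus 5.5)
Your proposal is correct and follows the paper's proof. Both instantiate \citet[Theorem 5.6]{Fritz2024} with \(G=V\), \(G_+=C\), \(u=e\) (the order unit coming from \Cref{lem:dual-base-order-unit}), match his strict condition (iii) to the five displayed inequalities with \(B\) as the lower and \(A\) as the upper variable, and read off the catalyst from his condition (i); your extra bookkeeping (Radon-ness on \(\R^d\), identifying nonzero monotone continuous homomorphisms with \(C^*\setminus\{0\}\), realizing the catalyst law as an independent \(Z\) via the richness assumption) only makes explicit what the paper leaves implicit, and the direction conventions you flag are exactly those the paper fixes by taking Fritz's lower variable to be \(B\).
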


\begin{proof}[Proof of \Cref{thm:fritz-finite}]
By \Cref{lem:dual-base-order-unit}, the standing assumptions make \(V\) a finite-dimensional ordered topological abelian group with closed positive cone \(C\) and order unit \(e\), and compact support is automatic for bounded random variables. This is \cite[Theorem 5.6]{Fritz2024}, applied with Fritz's lower variable \(X=B\), upper variable \(Y=A\), topological abelian group \(G=V\), positive cone \(G_+=C\), and order unit \(u=e\). Fritz's condition (iii), with strict inequalities, is precisely the displayed list above: the first two strict inequalities are the positive and negative exponential tests, the third is the mean test, and the last two strict inequalities are the upper- and lower-support tests. Its strict-converse conclusion is condition (i), namely the existence of a compactly supported Radon catalyst \(Z\), independent of the pair \((A,B)\), such that \(B+Z\le^d A+Z\). \end{proof}

Next, I use \Cref{thm:fritz-finite} to convert profile dominance into stochastic dominance after adding an independent catalyst. The theorem requires strict inequalities in its exponential, mean, and support tests. The small shift by \(\varepsilon e\) supplies exactly this strictness: since \(q(e)=1\), adding
\(\varepsilon e\) increases every scalar projection \(q\cdot X\) by
\(\varepsilon\). \Cref{lem:dual-base-order-unit} then lets every nonzero \(t\in C^*\) be written as \(t=\lambda q\), with \(\lambda>0\) and \(q\in Q\), so strict profile dominance
implies all of Fritz's tests. This yields the catalytic comparison I use below.

\begin{lemma}\label[lemma]{lem:catalytic-bridge}
Let \(X,Y\in L^\infty(V)\). If \(\kappa_X(a,q)\ge \kappa_Y(a,q)\) for all \((a,q)\in\Theta\), then for every \(\varepsilon>0\) there exists a bounded random vector \(Z\), independent of the pair \((X,Y)\), such that \(X+\varepsilon e+Z\succeq_{\mathrm{st}}Y+Z\).
\end{lemma}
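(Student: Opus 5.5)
The plan is to apply \Cref{thm:fritz-finite} with upper variable \(A\coloneqq X+\varepsilon e\) and lower variable \(B\coloneqq Y\). Both are bounded, hence compactly supported, so the only work is to check Fritz's five strict tests for every nonzero \(t\in C^*\). By \Cref{lem:dual-base-order-unit}, each such \(t\) can be written \(t=\lambda q\) with \(\lambda>0\) and \(q\in Q\). Every test is then a statement about the scalar risks \(q\cdot A=q\cdot X+\varepsilon\) and \(q\cdot Y\) along the ray indexed by \(\lambda\). The key identity is translation equivariance: \(K_a(Z+\varepsilon)=K_a(Z)+\varepsilon\) for every \(a\in\overline{\R}\). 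Combined with the hypothesis \(\kappa_X\ge\kappa_Y\), it gives the strict profile dominance
\[
\kappa_A(a,q)=\kappa_X(a,q)+\varepsilon\ge\kappa_Y(a,q)+\varepsilon>\kappa_Y(a,q)
\qquad\text{for all }(a,q)\in\Theta .
\]

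Next I translate each of Fritz's tests into a coordinate of the profile.

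\textbf{Positive exponential test.} With \(a=\lambda>0\), we have \(\E\mathrm e^{t\cdot B}=\exp(\lambda K_\lambda(q\cdot Y))\) and \(\E\mathrm e^{t\cdot A}=\exp(\lambda K_\lambda(q\cdot A))\). Since \(\exp\) is strictly increasing and \(\lambda>0\), strict dominance at \((\lambda,q)\) gives the required strict inequality.

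\textbf{Negative exponential test.} With \(a=-\lambda<0\), we have \(\E\mathrm e^{-t\cdot B}=\exp(-\lambda K_{-\lambda}(q\cdot Y))\), and likewise for \(A\). Strict dominance at \((-\lambda,q)\), multiplied by \(-\lambda<0\), reverses the inequality, which is exactly Fritz's direction \(\E\mathrm e^{-t\cdot B}>\E\mathrm e^{-t\cdot A}\).

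\textbf{Mean test.} We have \(\E[t\cdot B]=\lambda K_0(q\cdot Y)<\lambda K_0(q\cdot A)=\E[t\cdot A]\).

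\textbf{Support tests.} I need to identify \(\max_{x\in\supp B}t\cdot x\) with \(\lambda\esssup(q\cdot Y)=\lambda K_{+\infty}(q\cdot Y)\), and similarly for the minimum with \(K_{-\infty}\). This holds because \(x\mapsto t\cdot x\) is continuous and \(\supp\) is the closed support of the law: the maximum of a continuous function over the support of a compactly supported law equals the essential supremum of its pushforward. The strict inequalities at \((\pm\infty,q)\) then give the last two tests.

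Once all five tests are verified, \Cref{thm:fritz-finite} yields a compactly supported \(Z\), independent of \((A,B)\), with \(Y+Z\preceq_{\mathrm{st}}X+\varepsilon e+Z\). Since \(A\) is a deterministic shift of \(X\), independence of \(Z\) from \((A,B)\) is the same as independence from \((X,Y)\). A compactly supported \(Z\) is bounded, and the richness assumption on the probability space lets us realize it on our space. This gives the claim.

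The main obstacle is the bookkeeping in the support step. One point needs care: \(\supp A=\supp X+\varepsilon e\), so the \(\varepsilon\)-shift must be tracked through the support and not only through the expectations. I also have to confirm that Fritz's conditions are stated for all nonzero \(t\in C^*\), so that the reduction to \(Q\) via homogeneity in \(\lambda\) loses nothing. Everything else is monotonicity of \(\exp\) and \(\log\) with attention to signs.
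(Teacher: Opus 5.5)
Your proposal is correct and matches the paper's proof essentially step for step. The steps are the same: take \(A=X+\varepsilon e\), \(B=Y\) to get strict profile dominance by translation equivariance; write \(t=\lambda q\) with \(\lambda=t(e)>0\) via \Cref{lem:dual-base-order-unit}; and read off Fritz's five tests at \(a=\lambda,-\lambda,0,+\infty,-\infty\). Your remarks on realizing \(Z\) through the richness assumption and on matching support extrema with essential suprema just spell out details the paper leaves implicit.
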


\begin{proof}[Proof of \Cref{lem:catalytic-bridge}]
Fix \(\varepsilon>0\), and set \(A\coloneqq X+\varepsilon e\) and \(B\coloneqq Y\). Since \(K_a\left(Z+c\right)=K_a\left(Z\right)+c\) and \(q(e)=1\) for every \(q\in Q\),
\[
  \kappa_A(a,q)=K_a\left(q\cdot X+\varepsilon\right)=\kappa_X(a,q)+\varepsilon.
\]
Therefore, the profile dominance we assumed implies strict profile dominance:
\[
  \kappa_A(a,q)>\kappa_B(a,q)
  \qquad
  \forall(a,q)\in\Theta.
\]
The variables \(A\) and \(B\) are compactly supported because \(X\) and \(Y\) are bounded. By \Cref{lem:dual-base-order-unit}, \(e\) is an order unit in Fritz's sense.

Let \(t\in C^*\setminus\left\{0\right\}\). Again by \Cref{lem:dual-base-order-unit}, set \(\lambda\coloneqq t(e)>0\) and \(q\coloneqq t/t(e)\in Q\), so that \(t=\lambda q\). With this in hand, we pursue the five tests from \Cref{thm:fritz-finite}: first, set \(a = \lambda > 0\). Then, \(K_\lambda(q\cdot A)>K_\lambda(q\cdot B)\), i.e.,
\[
\frac{1}{\lambda}\log\E\mathrm e^{\lambda q\cdot A}
>
\frac{1}{\lambda}\log\E\mathrm e^{\lambda q\cdot B}.
\]
Since \(\lambda>0\) and \(t=\lambda q\), this is \(\E\mathrm e^{t\cdot B}<\E\mathrm e^{t\cdot A}\). The symmetric argument using \(a=-\lambda<0\) produces \(\E \mathrm e^{-t\cdot B}>\E \mathrm e^{-t\cdot A}\).

Now evaluate strict profile dominance at \(a=0\) to get \(\E\left[q\cdot A\right]>\E\left[q\cdot B\right]\). Since \(t=\lambda q\) and \(\lambda>0\), this produces \(\E\left[t\cdot B\right]<\E\left[t\cdot A\right]\).

Analogously, at \(a=+\infty\), \(\esssup(q\cdot A)>\esssup(q\cdot B)\). Since the laws are compactly supported, these essential suprema are maxima over supports. Multiplying by \(\lambda>0\),
\(\max_{x\in\supp B}t\cdot x<\max_{x\in\supp A}t\cdot x\). We easily do likewise at \(a=-\infty\).

In sum, all strict inequalities in \Cref{thm:fritz-finite} hold for the lower variable \(B\) and the upper variable \(A\). Hence, there exists a compactly supported \(Z\), independent of the pair \(\left(A,B\right)\), such that \(\Law(B+Z)\le_{\mathrm{st}}\Law(A+Z)\), equivalently \(A+Z\succeq_{\mathrm{st}}B+Z\). Since \(A=X+\varepsilon e\) and \(B=Y\), the same \(Z\) is independent of the pair \(\left(X,Y\right)\), and so \(X+\varepsilon e+Z\succeq_{\mathrm{st}}Y+Z\), delivering the result.\end{proof}

The remaining step is functional analytic. After the catalytic argument establishes monotonicity with respect to the profile order, the proof constructs a positive linear functional on the span of the profile functions. To represent that functional by a measure on \(\Theta\), it must be extended to all of \(C(\Theta)\) without losing positivity. \Cref{lem:positive-extension} provides exactly this extension, which allows us to then apply the Riesz-Markov-Kakutani theorem to the resulting positive linear functional on \(C(\Theta)\).

\begin{lemma}\label[lemma]{lem:positive-extension}
Let \(K\) be compact, and let \(E\subseteq C(K)\) be a linear subspace containing the constants. Then any positive linear functional \(I\colon E\to\R\) extends to a positive linear functional on \(C(K)\).
\end{lemma}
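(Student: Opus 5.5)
This is the standard M.~Riesz/Krein extension argument, and I would run it through the Hahn--Banach theorem with a sublinear dominating functional built from the order unit \(\mathbf 1\in E\). For \(f\in C(K)\), define
\[
p(f)\coloneqq\inf\left\{I(g)\colon g\in E,\ g\ge f\right\}.
\]
The set in the infimum is nonempty, because \(\|f\|_\infty\mathbf 1\in E\) dominates \(f\). The infimum is finite. Indeed, if \(g\ge f\ge-\|f\|_\infty\mathbf 1\), then positivity of \(I\) on \(E\) gives \(I(g)\ge-\|f\|_\infty I(\mathbf 1)\).

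Next I would check that \(p\) is sublinear. Positive homogeneity holds because \(g\ge f\) if and only if \(\lambda g\ge\lambda f\) for \(\lambda>0\). Subadditivity holds because \(g_1\ge f_1\) and \(g_2\ge f_2\) give \(g_1+g_2\ge f_1+f_2\), with \(g_1+g_2\in E\). I would also check that \(p=I\) on \(E\). If \(h\in E\), then \(h\) itself is admissible, so \(p(h)\le I(h)\). Conversely, any admissible \(g\ge h\) has \(g-h\in E\) nonnegative, so \(I(g)\ge I(h)\) by positivity.

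Hahn--Banach then gives a linear \(J\colon C(K)\to\R\) with \(J|_E=I\) and \(J\le p\) everywhere. It remains to see that \(J\) is positive. If \(f\ge0\), then \(-f\le0\), and \(0\in E\) dominates \(-f\). Hence \(J(-f)\le p(-f)\le I(0)=0\), so \(J(f)\ge0\). Positive linear functionals on \(C(K)\) are automatically bounded, with \(|J(f)|\le\|f\|_\infty J(\mathbf 1)\), so \(J\) is the desired extension.

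There is no real obstacle here. The only point needing care is that \(p\) is finite, and that is exactly where the hypothesis that \(E\) contains the constants is used. It makes \(\mathbf 1\) an order unit that majorizes every continuous function on the compact set \(K\) up to scaling. Without it, cofinality can fail and the extension need not exist.
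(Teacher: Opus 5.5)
Your argument is correct, but it uses a different form of Hahn--Banach from the paper. The paper first uses positivity and \(\one\in E\) to get \(|I(u)|\le I(\one)\|u\|_\infty\), hence \(\|I\|=I(\one)\). It then takes a norm-preserving extension \(L\) and reads positivity off the identity \(L(\one)=\|L\|\): for \(0\le f\le\one\) one has \(\|\one-f\|_\infty\le 1\), so \(L(\one)-L(f)\le L(\one)\). You instead use the dominated-extension form with the Krein--M.~Riesz majorant \(p(f)=\inf\{I(g)\colon g\in E,\ g\ge f\}\). Positivity then falls straight out of domination, since \(J(-f)\le p(-f)\le I(0)=0\), and boundedness is a by-product rather than an input. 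The paper's route is a few lines shorter. It relies on a special feature of \(C(K)\): the sup norm is exactly the order-unit norm of \(\one\), so \(\|L\|=L(\one)\) forces positivity. Your route never touches the norm and works verbatim in any ordered vector space for any majorizing subspace; this is the general Krein extension theorem. It also isolates exactly where the constants enter, namely in the finiteness of \(p\). Your closing remark about failure without cofinality is also right. For example, on \(K=[0,1]\), take \(E=\operatorname{span}\{x,x^2\}\) and \(I(f)=f'(0)\). This \(I\) is positive on \(E\), but a positive extension would need a measure \(\mu\) with \(\int x^2\,d\mu=0\) and \(\int x\,d\mu=1\), which is impossible.
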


\begin{proof}[Proof of \Cref{lem:positive-extension}]
Since \(I\) is positive and constants belong to \(E\), for every \(u\in E\),
\[
  -\|u\|_\infty\one\le u\le \|u\|_\infty\one \qquad \Longrightarrow \qquad |I(u)|\le I(\one)\|u\|_\infty.
\]
Thus, \(I\) is bounded and \(\|I\|=I(\one)\). By the Hahn-Banach theorem in its norm-preserving linear extension form, extend \(I\) to a bounded linear functional \(L\colon C(K)\to\R\) with the same norm. Then \(L(\one)=I(\one)=\|L\|\). If \(0\le f\le\one\), then \(\|\one-f\|_\infty\le1\), so that \(L(\one-f)\le \|L\|=L(\one)\). Consequently, \(L(f)\ge0\). Scaling delivers \(L(f)\ge0\) for every \(f\ge0\), so \(L\) is positive.
\end{proof}

With these in hand, I prove the main representation theorem:

\maintheorem*

\begin{proof}[Proof of \Cref{thm:scalar-main}]If \(\ell(e)=0\), then \(\ell=0\), since \(\ell\in C^*\) and every nonzero element of \(C^*\) is strictly positive on \(e\). Since every bounded \(X\) is order-bounded between \(-Me\) and \(Me\) for some \(M\), monotonicity delivers
\[
  \Phi(-Me)\le \Phi(X)\le \Phi(Me),
\]
and both endpoints are \(0\). Hence, \(\Phi\equiv0\), and the result holds with \(m=0\). Assume \(\ell(e)>0\).

First we prove a monotonicity upgrade:
\begin{claim}\label[claim]{claim:monoup}
    If \(\kappa_X\ge \kappa_Y\), then \(\Phi(X)\ge\Phi(Y)\).
\end{claim}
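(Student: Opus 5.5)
The plan is to combine \Cref{lem:catalytic-bridge} with monotonicity and independent additivity of \(\Phi\), and then remove the \(\varepsilon\)-shift by letting it vanish.

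Fix \(X,Y\) with \(\kappa_X\ge\kappa_Y\) and fix \(\varepsilon>0\). By \Cref{lem:catalytic-bridge} there is a bounded \(Z\), independent of the pair \((X,Y)\), with \(X+\varepsilon e+Z\succeq_{\mathrm{st}}Y+Z\). Cone-monotonicity of \(\Phi\) then gives \(\Phi(X+\varepsilon e+Z)\ge\Phi(Y+Z)\).

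Next I split both sides using independent additivity.
\begin{itemize}
\item \(Z\) is independent of \(X\), and a constant is independent of everything. So \(X+\varepsilon e\) is independent of \(Z\), and \(\Phi(X+\varepsilon e+Z)=\Phi(X)+\Phi(\varepsilon e)+\Phi(Z)=\Phi(X)+\varepsilon\ell(e)+\Phi(Z)\). Here \(\Phi(\varepsilon e)=\varepsilon\ell(e)\) by \Cref{lem:detlin}.
\item Likewise \(\Phi(Y+Z)=\Phi(Y)+\Phi(Z)\).
\end{itemize}
Since \(Z\) is bounded, \(\Phi(Z)\) is a finite real number, so it cancels. This leaves \(\Phi(X)+\varepsilon\ell(e)\ge\Phi(Y)\). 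Letting \(\varepsilon\downarrow0\) yields \(\Phi(X)\ge\Phi(Y)\).

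The only point needing care is the independence bookkeeping. Independence of \(Z\) from the pair \((X,Y)\) in particular makes it independent of \(X\) alone and of \(Y\) alone, which is exactly what the two uses of additivity require. The substantive work, namely getting strict inequalities so that Fritz's theorem applies, is already contained in \Cref{lem:catalytic-bridge}. So I expect no real obstacle here, only the check that the \(\varepsilon\)-term is linear in \(\varepsilon\) through \(\ell\).
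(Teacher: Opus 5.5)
Your proof is correct and matches the paper's argument step for step: use \Cref{lem:catalytic-bridge} to get the catalyst \(Z\), apply cone-monotonicity and independent additivity to obtain \(\Phi(X)+\varepsilon\ell(e)+\Phi(Z)\ge\Phi(Y)+\Phi(Z)\), cancel the finite term \(\Phi(Z)\), and let \(\varepsilon\downarrow0\). You also write out that \(\Phi(\varepsilon e)=\varepsilon\ell(e)\) follows from linearity of \(\ell\) (\Cref{lem:detlin}), a step the paper leaves implicit.
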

\begin{proof}
    Suppose \(\kappa_X\ge \kappa_Y\), in which case by \Cref{lem:catalytic-bridge}, for every \(\varepsilon>0\), there exists independent-of-\((X,Y)\) \(Z\) such that \[
  X+\varepsilon e+Z\succeq_{\mathrm{st}} Y+Z.
\]
By monotonicity and additivity,
\[
  \Phi(X+\varepsilon e)+\Phi(Z)
  =
  \Phi(X+\varepsilon e+Z)
  \ge
  \Phi(Y+Z)
  =
  \Phi(Y)+\Phi(Z),
\]
so that \(\Phi(X)+\varepsilon\ell(e)\ge \Phi(Y)\). Then let \(\varepsilon\downarrow0\) to obtain the claim.\end{proof}

Let \(\mathcal S\subset C(\Theta)\) be the additive semigroup generated by profiles:
\[
  \mathcal S=
  \left\{
  \kappa_{X_1}+\cdots+\kappa_{X_n}:
  n\ge0,\ X_i\in L^\infty(V)
  \right\}.
\]
The empty sum is \(0\). Since profiles add under independent sums, every element of \(\mathcal S\) is itself the profile of an independent sum.

Define \(F \colon \mathcal S\to\R\) by
\[
  F\left(\sum_{i=1}^n\kappa_{X_i}\right)
  =
  \sum_{i=1}^n\Phi(X_i).
\]
\begin{claim}
    The map \(F\) is well-defined.
\end{claim}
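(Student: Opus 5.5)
To show that \(F\) is well-defined, I will prove that two representations of the same element of \(\mathcal S\) give the same value. Suppose
\[
\sum_{i=1}^n\kappa_{X_i}=\sum_{j=1}^k\kappa_{Y_j}.
\]
I need \(\sum_i\Phi(X_i)=\sum_j\Phi(Y_j)\).

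The first step is to replace each finite sum by a single random vector. Using the richness of the probability space, I realize \(X_1',\ldots,X_n'\) as mutually independent copies with \(X_i'\stackrel{d}{=}X_i\). Set \(X\coloneqq X_1'+\cdots+X_n'\). Likewise, I take independent copies \(Y_j'\stackrel{d}{=}Y_j\) and set \(Y\coloneqq\sum_j Y_j'\). The empty sum is the deterministic vector \(0\), whose profile is \(0\) and whose value is \(\Phi(0)=\ell(0)=0\).

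The second step compares profiles and values. The profile \(\kappa_X\) depends only on the law of \(X\), since \(K_a(q\cdot X)\) is determined by that law. Iterating \Cref{lem:profile-additivity} gives
\[
\kappa_X=\sum_i\kappa_{X_i'}=\sum_i\kappa_{X_i},
\qquad
\kappa_Y=\sum_j\kappa_{Y_j}.
\]
Iterated independence needs a little care: \(X_m'\) must be independent of \(X_1'+\cdots+X_{m-1}'\). This holds because the \(X_i'\) are mutually independent. In the same way, law-invariance and iterated independent additivity of \(\Phi\) give \(\Phi(X)=\sum_i\Phi(X_i)\) and \(\Phi(Y)=\sum_j\Phi(Y_j)\).

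The third step invokes the monotonicity upgrade. The hypothesis says exactly that \(\kappa_X=\kappa_Y\). Applying \Cref{claim:monoup} in both directions, \(\kappa_X\ge\kappa_Y\) and \(\kappa_Y\ge\kappa_X\), yields \(\Phi(X)=\Phi(Y)\). Hence \(\sum_i\Phi(X_i)=\sum_j\Phi(Y_j)\), so \(F\) is well-defined.

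All the real work is contained in \Cref{claim:monoup}, which rests on the catalytic theorem of \citet{Fritz2024}. The only delicate point is bookkeeping: the original \(X_i\) need not be independent, so I must pass to independent copies, and this substitution is justified by law-invariance of \(\Phi\) and of \(\kappa\).
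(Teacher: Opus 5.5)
Your proposal is correct and follows essentially the same route as the paper's proof. Both pass to mutually independent copies so that each sum of profiles becomes the profile of a single independent sum, apply \Cref{claim:monoup} in both directions to get \(\Phi(\widetilde X)=\Phi(\widetilde Y)\), and then unwind using law-invariance and independent additivity of \(\Phi\).
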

\begin{proof}
 If \(\sum_i\kappa_{X_i}=\sum_j\kappa_{Y_j}\), let \(\widetilde X=\sum_i X_i'\) and \(\widetilde Y=\sum_j Y_j'\), where the \(X_i'\)'s and \(Y_j'\)'s are independent copies of the corresponding variables, all jointly independent. Then,
\[
  \kappa_{\widetilde X}=\sum_i\kappa_{X_i}=\sum_j\kappa_{Y_j}=\kappa_{\widetilde Y}.
\]
Applying \Cref{claim:monoup} in both directions yields \(\Phi(\widetilde X)=\Phi(\widetilde Y)\). By law-invariance and additivity, this is exactly \(\sum_i\Phi(X_i)=\sum_j\Phi(Y_j)\).
\end{proof}
Moreover, the map \(F\) is additive on \(\mathcal S\) and is monotone: if \(g,h\in\mathcal S\) and \(g\ge h\) pointwise, then, writing \(g=\kappa_{\widetilde X}\) and \(h=\kappa_{\widetilde Y}\),
\(F(g)=\Phi(\widetilde X)\ge\Phi(\widetilde Y)=F(h)\).

We next prove a Lipschitz estimate:
\begin{claim}\label[claim]{claim:lipschitz}
    For \(g,h \in \mathcal S\), \(|F(g)-F(h)|\le \ell(e)\|g-h\|_\infty\).
\end{claim}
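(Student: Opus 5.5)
The plan is to reduce the Lipschitz estimate to the monotonicity of \(F\) on \(\mathcal S\), combined with the fact that constants lie in \(\mathcal S\) up to a shift. Fix \(g,h\in\mathcal S\) and set \(r\coloneqq\|g-h\|_\infty\). Then
\[
h-r\le g\le h+r
\]
pointwise on \(\Theta\). The obstruction is that \(h\pm r\) need not lie in \(\mathcal S\), because \(\mathcal S\) is only a semigroup. I will therefore move the constant to the other side: compare \(g\) with \(h+\kappa_{re}\) and \(h\) with \(g+\kappa_{re}\). Both of these are in \(\mathcal S\), since \(\kappa_{re}\) is a profile. By \Cref{lem:profile-additivity}, \(\kappa_{re}(a,q)=q\cdot(re)=r\) for every \((a,q)\), because \(q(e)=1\). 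So \(\kappa_{re}\) is exactly the constant function \(r\).

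The pointwise inequalities become \(g\le h+\kappa_{re}\) and \(h\le g+\kappa_{re}\), each between elements of \(\mathcal S\). Monotonicity of \(F\) on \(\mathcal S\) (already established) and additivity of \(F\) then give
\[
F(g)\le F(h)+F(\kappa_{re})=F(h)+\Phi(re)=F(h)+\ell(re)=F(h)+r\,\ell(e),
\]
using linearity of \(\ell\) from \Cref{lem:detlin}. The symmetric inequality gives \(F(h)\le F(g)+r\,\ell(e)\). Together these yield \(|F(g)-F(h)|\le \ell(e)\|g-h\|_\infty\).

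The only point needing care is the appeal to monotonicity of \(F\). It relies on representing each element of \(\mathcal S\) as a single profile \(\kappa_{\widetilde X}\) of an independent sum and then invoking \Cref{claim:monoup}. For \(h+\kappa_{re}\) this is immediate: take \(\widetilde Y+re\), where adding a deterministic vector preserves independence. No real obstacle remains, since well-definedness and monotonicity have been handled above; the estimate is essentially bookkeeping once constants are recognized as profiles of deterministic multiples of \(e\).
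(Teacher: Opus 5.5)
Your proof is correct and is essentially the paper's: bound \(g\le h+r\one\) with \(r\one=\kappa_{re}\in\mathcal S\), use monotonicity and additivity of \(F\) to get \(F(g)\le F(h)+r\,\ell(e)\), and symmetrize. One wording point: you say \(h\pm r\) need not lie in \(\mathcal S\), but it always does, since \(c\one=\kappa_{ce}\in\mathcal S\) for every real \(c\) (and \(h+r\one\) is exactly the element your argument uses), so that ``obstruction'' is not one.
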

\begin{proof}
    Let \(g,h\in\mathcal S\). Setting \(r=\|g-h\|_\infty\), \(g\le h+r\one\). Since \(r\one=\kappa_{re}\), we have \(h+r\one\in\mathcal S\), and so \(F(g)\le F(h+r\one)=F(h)+r\ell(e)\). Symmetrically, \(F(h)\le F(g)+r\ell(e)\). Combining, we obtain the claim.
\end{proof}

Let \(E_{\Q}\) be the rational linear span of \(\mathcal S\). Equivalently, every \(u\in E_{\Q}\) can be written as \(u=\frac1N(g-h)\) for some integer \(N\ge1\) and some \(g,h\in\mathcal S\). Define
\[
  I(u) \coloneqq \frac1N(F(g)-F(h)).
\]
This is well-defined,\footnote{\(\frac1N(g-h)=\frac1M(g'-h')\) implies \(M g+N h'=M h+N g'\), where \(M g\) denotes the sum of \(M\) copies of \(g\) in \(\mathcal S\); and additivity of \(F\) delivers equality of the corresponding values.} and so \(I\) is a \(\Q\)-linear functional on \(E_{\Q}\).

Furthermore, the functional \(I\) is positive on \(E_{\Q}\). Indeed, if \(u=\frac1N(g-h)\ge0\), then \(g\ge h\), and by the monotonicity of \(F\) on \(\mathcal S\), \(F(g)\ge F(h)\). Thus, \(I(u)\ge0\). The Lipschitz estimate of \Cref{claim:lipschitz} extends to \(E_{\Q}\):
\(|I(u)|\le \ell(e)\|u\|_\infty\).\footnote{Indeed, if \(u=\frac1N(g-h)\), then
\(|I(u)|=\frac1N|F(g)-F(h)|\le\frac{\ell(e)}{N}\|g-h\|_\infty=\ell(e)\|u\|_\infty\).}

Now let \(E\) be the \(\sup\)-norm closure of \(E_{\Q}\),  \(E \coloneqq \overline{E_{\Q}}^{\|\cdot\|_\infty}\subset C(\Theta)\). The space \(E\) is a real linear subspace: if \(u_n,v_n\in E_{\Q}\) converge to \(u,v\in E\), then \(u_n+v_n\to u+v\), so \(u+v\in E\); if \(u_n\in E_{\Q}\) converges to \(u\in E\) and \(\alpha\in\R\), choose rationals \(\alpha_n\to\alpha\), then \(\alpha_nu_n\to\alpha u\), so \(\alpha u\in E\). Our Lipschitz estimate (\Cref{claim:lipschitz}), therefore, lets \(I\) extend uniquely to a continuous real-linear functional on \(E\). The extension is positive.\footnote{Indeed, if \(u\in E\), \(u\ge0\), choose \(u_n\in E_{\Q}\) with \(u_n\to u\). Choose rational \(r_n>0\) such that \(r_n>\|u_n-u\|_\infty\) and \(r_n\to0\). Then \(u_n+r_n\one\ge0\) and so \(I(u_n)+r_n\ell(e)=I(u_n+r_n\one)\ge0\). Letting \(n\to\infty\), we get \(I(u)\ge0\).}

The space \(E\) also contains all constant functions, because \(c\one=\kappa_{ce}\) for every \(c\in\R\). By \Cref{lem:positive-extension}, \(I\) extends to a positive linear functional \(\overline I\colon C(\Theta)\to\R\). By the Riesz-Markov-Kakutani representation theorem, there exists a finite positive Borel measure \(m\) on \(\Theta\) such that \(\overline I(f)=\int_\Theta f dm\) for all \(f\in C(\Theta)\). In particular,
\[
  \Phi(X)=I(\kappa_X)=\overline I(\kappa_X)
  =
  \int_\Theta K_a(q\cdot X) dm(a,q).
\]

Finally, for deterministic \(x\in V\),
\[
  \ell(x)=\Phi(x)
  =
  \int_\Theta q\cdot x dm(a,q)
  =
  \left(\int_\Theta q dm(a,q)\right)\cdot x.
\]
Since this holds for every \(x\), we obtain
\(\int_\Theta q dm(a,q)=\ell\), and by evaluating at \(e\), we obtain
\[
  m(\Theta)=\int_\Theta q(e) dm(a,q)=\ell(e).
\]

Conversely, let \(m\) be a finite positive Borel measure on \(\Theta\), and define
\(\Phi_m(X) \coloneqq \int_\Theta K_a(q\cdot X) dm(a,q)\). Observe that this integral is finite for every bounded \(X\) by
\Cref{lem:profile-lipschitz}. Law-invariance is immediate. If \(X,Y\) are independent, then \(q\cdot X\) and \(q\cdot Y\) are independent for every \(q\in Q\), so \Cref{lem:profile-additivity} guarantees \(\Phi_m(X+Y)=\Phi_m(X)+\Phi_m(Y)\).

If \(X\succeq_{\mathrm{st}}Y\), then \(q\cdot X\) FOSD \(q\cdot Y\) for every \(q\in Q\): for each \(r\in\R\), the set \(\{v \colon q\cdot v\ge r\}\) is closed and \(C\)-upper. Each \(K_a\) is monotone for FOSD, including \(a=0,\pm\infty\), hence, \(K_a(q\cdot X)\ge K_a(q\cdot Y)\) for every \((a,q)\in\Theta\). Integrate against \(m\ge0\) to get \(\Phi_m(X)\ge\Phi_m(Y)\). Finally, for deterministic \(x\), \Cref{lem:profile-additivity} implies
\[
  \ell_m(x)=\Phi_m(x)=\int_\Theta q\cdot x dm(a,q)=\left(\int_\Theta q dm(a,q)\right)\cdot x.
\]
Thus, \(\ell_m=\int_\Theta q dm(a,q)\), and \(\ell_m(e)=m(\Theta)\). Since \(\Phi_m(ce)=c m(\Theta)\) for every \(c\in\R\), \(\Phi_m\) is an ACE if and only if \(m(\Theta)=1\), i.e., if and only if \(m\) is a probability measure.
\end{proof}

I finish by proving \Cref{cor:preference-representation}:
\begin{proof}[Proof of \Cref{cor:preference-representation}]
By \Cref{prop:preference-mas-bridge}, \(\Phi_\succeq\) is an ACE. \Cref{thm:scalar-main} produces a finite positive representing measure \(m\) and the barycenter identity. Since \(\Phi_\succeq(e)=1\), we have \(\ell(e)=1\). Therefore, \(m(\Theta)=\ell(e)=1\), so \(m\in\mathcal P(\Theta)\), and \(\ell\in Q\).

Conversely, if \(m\in\mathcal P(\Theta)\), then \(\Phi_m(X)=\int_\Theta K_a(q\cdot X) dm(a,q)\) is an ACE, since each map \(X\mapsto K_a(q\cdot X)\) is law-invariant, cone-monotone, additive under independent sums, and satisfies \(K_a(q\cdot ce)=c\). Its deterministic restriction is
\(\ell_m=\int_\Theta q dm(a,q)\). \Cref{prop:preference-mas-bridge} then provides an admissible preference, with fixed deterministic valuation \(\ell\) exactly when \(\int_\Theta q dm(a,q)=\ell\).
\end{proof}


\subsection{Proofs of \texorpdfstring{\Cref{thm:profile-order} and \Cref{thm:unanimous-background}}{theorems}}

\begin{proof}[Proof of \Cref{thm:profile-order}]
Suppose first that \(X\succeq_\kappa Y\), and let \(\Phi\) be an ACF. By \Cref{thm:scalar-main}, there is a finite positive Borel measure \(m\) on \(\Theta\) such that
\[
\Phi(W)=\int_\Theta K_a(q\cdot W)dm(a,q)
\]
for every bounded \(W\). Integrating the inequalities \(\kappa_X(a,q)\ge\kappa_Y(a,q)\) yields \(\Phi(X)\ge\Phi(Y)\).

Conversely, suppose \(\Phi(X)\ge\Phi(Y)\) for every ACF \(\Phi\), and fix \((a,q)\in\Theta\). By the converse direction of \Cref{thm:scalar-main}, the Dirac measure \(\delta_{(a,q)}\) defines the ACF
\[
W\mapsto K_a(q\cdot W).
\]
Therefore, \(K_a(q\cdot X)\ge K_a(q\cdot Y)\). Since \((a,q)\) was arbitrary, \(X\succeq_\kappa Y\). This proves \ref{profile-orderi} \(\iff\) \ref{profile-orderii}.

Now suppose \(X\succeq_\kappa Y\), and let \(\succeq \in \mathscr E\). By \Cref{prop:preference-mas-bridge}, \(\Phi_{\succeq}\) is an ACE, hence an ACF. The equivalence of \ref{profile-orderi} and \ref{profile-orderii} delivers \(\Phi_{\succeq}(X)\ge\Phi_{\succeq}(Y)\). Applying \Cref{prop:preference-mas-bridge} again: \(X\succeq Y\).

Conversely, suppose \(X\succeq Y\) for every \(\succeq \in \mathscr E\), and fix \((a,q)\in\Theta\). Set \(m=\delta_{(a,q)}\). By \Cref{cor:preference-representation}, \(m\) defines an admissible preference \(\succeq_m\in\mathscr E\) with certainty equivalent \(\Phi_m(W)=K_a(q\cdot W)\). The assumed unanimity implies \(X\succeq_mY\). By \Cref{prop:preference-mas-bridge}, \(\Phi_m(X)\ge\Phi_m(Y)\), so \(K_a(q\cdot X)\ge K_a(q\cdot Y)\). Since \((a,q)\) was arbitrary, \(X\succeq_\kappa Y\). This proves \ref{profile-orderi} \(\iff\) \ref{profile-orderiii}.

Now fix \(\ell\in Q\). Suppose
\[
\int_\Theta\left(\kappa_X-\kappa_Y\right)dm\ge0\ \text{for every }m\in\mathcal M_\ell,
\]
and let \(\succeq \in \mathscr E_\ell\). By \Cref{cor:preference-representation}, there is \(m\in\mathcal M_\ell\) such that
\[
\Phi_{\succeq}(W)=\int_\Theta K_a(q\cdot W)dm(a,q)
\]
for every bounded \(W\). Hence,
\[
\Phi_{\succeq}(X)-\Phi_{\succeq}(Y)=\int_\Theta\left(\kappa_X-\kappa_Y\right)dm\ge0.
\]
By \Cref{prop:preference-mas-bridge}, \(X\succeq Y\).

Conversely, suppose \(X\succeq Y\) for every \(\succeq \in \mathscr E_\ell\), and fix \(m\in\mathcal M_\ell\). By \Cref{cor:preference-representation}, \(m\) defines an admissible preference \(\succeq_m\in\mathscr E_\ell\) with certainty equivalent
\[
\Phi_m(W)=\int_\Theta K_a(q\cdot W)dm(a,q).
\]
By the assumed unanimity, \(X\succeq_mY\); and by \Cref{prop:preference-mas-bridge}, \(\Phi_m(X)\ge\Phi_m(Y)\). So,
\[
\int_\Theta\left(\kappa_X-\kappa_Y\right)dm
=\Phi_m(X)-\Phi_m(Y)
\ge0.
\]
which proves the fixed-\(\ell\) equivalence.
\end{proof}

\begin{proof}[Proof of \Cref{thm:unanimous-background}]
Suppose \(X\succeq_{\cat}Y\). Choose \(Z \in L^\infty(V)\), independent of \((X,Y)\), such that \(X+Z\succeq_{\mathrm{st}}Y+Z\). Fix \((a,q)\in\Theta\). Since \(q\in Q\subset C^*\), the projection \(v\mapsto q\cdot v\) is \(C\)-increasing, so \(q\cdot X+q\cdot Z\) FOSD \(q\cdot Y+q\cdot Z\). By the definition of \(K_a\), \(K_a\) is monotone for FOSD. Hence,
\(K_a(q\cdot X+q\cdot Z)\ge K_a(q\cdot Y+q\cdot Z)\). 

Since \(Z\) is independent of \((X,Y)\), we have from \Cref{lem:profile-additivity} that
\[
K_a(q\cdot X)+K_a(q\cdot Z)=K_a(q\cdot X+q\cdot Z)\ge K_a(q\cdot Y+q\cdot Z)=K_a(q\cdot Y)+K_a(q\cdot Z).
\]
Cancelling \(K_a(q\cdot Z)\) yields \(K_a(q\cdot X)\ge K_a(q\cdot Y)\). Since \((a,q)\) was arbitrary, \(X\succeq_\kappa Y\).

The implication \(X\succeq_\kappa Y\Rightarrow X+\varepsilon e\succeq_{\cat}Y\) for every \(\varepsilon>0\) is exactly \Cref{lem:catalytic-bridge}.

It remains to prove the closure equivalence. Suppose first that \(Y\in\overline{\mathcal L_{\cat}(X)}^{\|\cdot\|_{\infty,e}}\). Then there are \(Y_n\in\mathcal L_{\cat}(X)\) with
\(\|Y_n-Y\|_{\infty,e}\to0\). For each \(n\), \(X\succeq_{\cat}Y_n\), so the first paragraph of the proof delivers \(X\succeq_\kappa Y_n\). Moreover \Cref{lem:profile-lipschitz} gives us \(\|\kappa_{Y_n}-\kappa_Y\|_\infty\to0\) and so \(\kappa_X\ge\kappa_Y\), i.e., \(X\succeq_\kappa Y\).

Conversely, suppose \(X\succeq_\kappa Y\). By \Cref{lem:catalytic-bridge}, for every \(\varepsilon>0\) there exists \(Z_\varepsilon\), independent of \((X,Y)\), such that
\[
X+\varepsilon e+Z_\varepsilon\succeq_{\mathrm{st}}Y+Z_\varepsilon.
\]
Translating closed \(C\)-upper sets by \(\varepsilon e\), this is equivalent to
\[
X+Z_\varepsilon\succeq_{\mathrm{st}}Y-\varepsilon e+Z_\varepsilon.
\]
Thus, \(X\succeq_{\cat}Y-\varepsilon e\), so \(Y-\varepsilon e\in\mathcal L_{\cat}(X)\). Since
\(\|\left(Y-\varepsilon e\right)-Y\|_{\infty,e}=\varepsilon\), we have \(Y\in\overline{\mathcal L_{\cat}(X)}^{\|\cdot\|_{\infty,e}}\).

The final equivalence follows from the closure equivalence ``\(X\succeq_\kappa Y\) \(\iff\) \(Y\in\overline{\mathcal L_{\cat}(X)}^{\|\cdot\|_{\infty,e}}\)'' and \Cref{thm:profile-order}.
\end{proof}

\bibliography{sample}

\end{document}